\newtheorem{thm}{Theorem}[section]
\newtheorem{prop}{Proposition}[section]
\newtheorem{remark}{Remark}[section]
\newtheorem{lem}{Lemma}[section]
\newtheorem{defi}{Definition}[section]
\def\subFV{\scriptscriptstyle{FV}}
\def\subFS{\scriptscriptstyle{FS}}
\def\subVS{\scriptscriptstyle{VS}}
\renewcommand{\vec}[1]{\mathbf{#1}}
\title{Sharp-interface model for simulating solid-state dewetting in three dimensions}
\author{Wei Jiang\thanks{School of Mathematics and Statistics {\rm\&} Computational Science Hubei Key Laboratory, Wuhan University, Wuhan 430072, P.R. China (jiangwei1007@whu.edu.cn). This author's research was supported by the National Natural Science Foundation of China No. 11871384, and the Natural Science Foundation of Hubei Province Grant No. 2018CFB466.}
\and Quan Zhao\thanks{Department of Mathematics, National University of Singapore, Singapore 119076 (quanzhao90@u.nus.edu). This author's research was supported by the Ministry of Education of Singapore grant R-146-000-247-114.}
\and
Weizhu Bao\thanks{Department of Mathematics, National University of Singapore, Singapore 119076 (matbaowz@nus.edu.sg, URL: http://www.math.nus.edu.sg/\~{}bao/). This author's research was supported by the Ministry of Education of Singapore grant R-146-000-247-114 and the National Natural Science Foundation of China No. 91630207.}
}
\date{}
\begin{document}

\maketitle

\begin{abstract}
The problem of simulating solid-state dewetting of thin films in three dimensions (3D) by using a sharp-interface approach
is considered in this paper. Based on the thermodynamic variation, a speed method is used for
calculating the first variation to the total surface energy functional. The speed method shares more advantages than the traditional use of parameterized curves (or surfaces), e.g., it is more intrinsic and its variational structure
(related with Cahn-Hoffman $\boldsymbol{\xi}$-vector) is clearer. By making use of the first variation, necessary conditions for the equilibrium shape of the solid-state dewetting problem is given, and a kinetic sharp-interface model which includes the surface energy anisotropy is also proposed. This sharp-interface model describes the interface evolution in 3D which occurs through surface diffusion and contact line migration. By solving the proposed model, we perform numerical simulations to investigate the evolution of patterned films, e.g., the evolution of a cuboid and pinch-off of a long cuboid. Numerical simulations in 3D demonstrate the performance of the sharp-interface approach to capture many of the complexities observed in solid-state dewetting experiments.
\end{abstract}

\begin{keywords}
Solid-state dewetting, surface diffusion, Cahn-Hoffman $\boldsymbol{\xi}$-vector, shape derivative, thermodynamic variation.
\end{keywords}

\begin{AMS}
74G65, 74G15, 74H15, 49Q10
\end{AMS}

\pagestyle{myheadings} \markboth{W. Jiang, Q. Zhao and W. Bao}
{Sharp-interface approach for simulating solid-state dewetting in 3D}

\section{Introduction}
\label{s1} \setcounter{equation}{0}

Driven by capillarity effects, solid thin films sitting on a substrate are rarely stable and could exhibit complex morphological changes, e.g., faceting~\cite{Jiran90,Ye10a}, edge retraction~\cite{Wong00,Dornel06,Ye11a,Kim13}, pinch-off~\cite{Jiang12,Kim15}, fingering instabilities~\cite{Kan05} and so on. This phenomenon, known as solid-state dewetting~\cite{Thompson12}, has been widely observed in many thin film\slash substrate systems~\cite{Jiran90,Kim13}. On one side, solid-state dewetting can be deleterious by fabricating the thin film structures, e.g., microelectronic and optoelectronic devices, thus destroying the reliability of the devices; on the other side, it is advantageous and can be positively used to produce the well-controlled formation of an array of micro-/nanoscale particles, e.g., used in sensors \cite{Armelao06} and as catalysts for carbon \cite{Randolph07} and semiconductor nanowire growth \cite{Schmidt09}. Recently, solid-state dewetting has attracted considerable interest, and has been widely studied by many experimental (e.g.,~\cite{Amram12,Naffouti17,Kovalenko17,Rabkin14}) and theoretical (e.g.,~\cite{Bao17b,Bao17,Dziwnik15,Jiang12,Jiang18c,Srolovitz86,Wang15,
Wong00}) research groups. The understanding of its equilibrium patterns and kinetic morphology evolution characteristics could provide important knowledge to develop new experimental methods in order to control solid-state dewetting~\cite{Leroy16}, and enhance its potential applications in thin film technologies.

Modeling solid-state dewetting has been one of active research areas and become increasingly urgent in decades. In general, surface diffusion and contact line migration have been recognized as the two main kinetic features for the evolution of solid-state dewetting~\cite{Bao17b,Jiang18}. In 1986, Srolovitz and Safran~\cite{Srolovitz86} proposed a simplified sharp-interface model to study the hole growth during the dewetting under the three assumptions, i.e., isotropic surface energy, small slope profile and cylindrical symmetry. Based on the above model, Wong {\it {et al.}} designed a ``marker particle'' numerical scheme to investigate the two-dimensional retraction of a discontinuous film (a film with a step) and the evolution of a perturbed cylindrical wire on a substrate~\cite{Wong00,Du10}. These earlier studies were focused on the isotropic surface energy, although recent experiments have demonstrated that the crystalline anisotropy could play important roles in solid-state dewetting. To include the surface energy anisotropy, many approaches have been proposed in recent years, such as a discrete model~\cite{Dornel06}, a kinetic Monte Carlo model~\cite{Pierre09b}, a crystalline model~\cite{Carter95,Zucker13} and continuum models based on partial differential equations~\cite{Bao17,Jiang16,Jiang18,Wang15}. From a mathematical perspective, theoretical solid-state dewetting studies can be categorized into two major problems: one focuses on the equilibrium of solid particles on substrates~\cite{Bao17b,Korzec14}; the other focuses on investigating the kinetic evolution of solid-state dewetting~\cite{Jiang16,Jiang18,Wang15}. In this paper, we aim at developing a sharp-interface approach for studying
these problems about solid-state dewetting in 3D.

Under isothermal conditions, the equilibrium shape for a free-standing solid particle can be formulated by minimizing the interfacial energy subject to the constraint of a constant volume:
\begin{equation}
\min_{\Omega} W:=W(S)= \int_S\gamma(\vec n)\;dS \quad\rm{s.t.}\quad|\Omega|=const,
\label{eqn:Sfenergy}
\end{equation}
where $\Omega \subset \mathbb{R}^3$ is the enclosed domain by a closed surface $S$, and $\gamma(\vec n)$ is the surface energy (density) with $\vec n=(n_1,n_2,n_3)^T$ representing the crystallographic orientation. Based on the $\gamma$-plot, the equilibrium shape can be geometrically constructed via the well-known Wulff (Gibbs-Wulff) construction~\cite{Wulff1901}. The resulted Wulff shape, is the inner convex region bounded by all planes that are perpendicular to orientation $\vec n$ and at a distance of $\gamma(\vec n)$ from the origin. The Wulff-Kaischew construction (also called as Winterbottom construction)~\cite{Kaischew50,Winterbottom67,Bao17} was subsequently proposed to handle with the case about particles on substrates by truncating the Wulff shape with a flat plane, and where
the Wulff shape is truncated depends on the wettability of the substrate. Meanwhile, many theories~\cite{Cahn74,Cahn93} demonstrated that the derivative of $\gamma(\vec n)$ plays an important role in investigating equilibrium and kinetic problems for solid particles with anisotropic surface energies. In 1972, Cahn and Hoffman developed the theory of $\boldsymbol{\xi}$-vector~\cite{Hoffman72,Cahn74} to describe the surface energy anisotropy of solid materials. It is defined based on a homogeneous extension of $\gamma(\vec n)$, i.e.,
\begin{equation}
\boldsymbol{\xi}(\vec n)=\nabla\hat{\gamma}(\vec p)\Big|_{\vec p =\vec n},\;\rm{with}\; \hat{\gamma}(\vec p)=|\vec p|\gamma\Bigl(\frac{\vec p}{|\vec p|}\Bigr),\quad\forall\vec p\in\mathbb{R}^3\backslash\{\vec 0\},
\label{eqn:xivector}
\end{equation}
where $|\vec p|:=\sqrt{p_1^2+p_2^2+p_3^2}$ for $\vec p=(p_1,p_2,p_3)^T\in\mathbb{R}^3$.
Under this extension, $\hat\gamma(\vec p)$ satisfies
\begin{equation}
\hat\gamma(\lambda\vec p) = |\lambda|\hat\gamma(\vec p),\qquad \nabla\hat{\gamma}(\vec p)\cdot\vec p = \hat{\gamma}(\vec p),\quad\forall\lambda\neq 0,\vec p\in\mathbb{R}^3\backslash\{\vec 0\}.
\label{eqn:xiequation}
\end{equation}
\begin{figure}
\centering
    \includegraphics[width=0.95\textwidth]{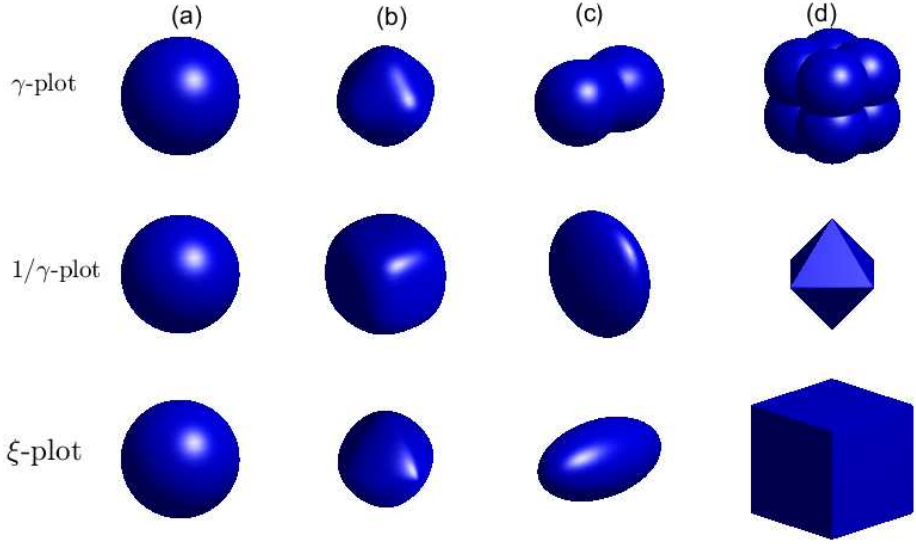}
  \caption{$\gamma$-plot, $1/\gamma$-plot and $\boldsymbol{\xi}$-plot for different surface energy anisotropies: (a) isotropic surface energy; (b) cubic anisotropic surface energy defined as $\gamma(\vec n)=1+0.3(n_1^4+n_2^4+n_3^4)$; (c) ellipsoidal surface energy $\gamma(\vec n)=\sqrt{4n_1^2+n_2^2+n_3^2}$; (d) ``cusped'' surface energy defined as $\gamma(\vec n)=|n_1|+|n_2|+|n_3|$.}
  \label{fig:anisotropy}
\end{figure}

Compared to the traditional use of scalar function $\gamma$ (or $\gamma$-plot), ~$\boldsymbol{\xi}$-vector formulation has some advantages in the description of equilibrium shapes and thermodynamic evolution for crystalline interfaces~\cite{Jiang18,Wheeler96}. From~\eqref{eqn:xiequation}, we have $\boldsymbol{\xi}\cdot\vec n=\gamma(\vec n)$,
and the magnitude of the normal component for $\boldsymbol{\xi}$ equals to $\gamma(\vec n)$. Meanwhile, $\boldsymbol{\xi}$-plot shares similar geometry with the Wulff shape, and it can be regarded as a mathematical representation of the equilibrium shape \cite{Cahn74,Peng98,Sekerka05} when its $1/\gamma$-plot is convex (i.e., weakly anisotropic). Fig.~\ref{fig:anisotropy} depicts the $\gamma$-plot, $1/\gamma$-plot and $\boldsymbol{\xi}$-plot for four different types of surface energy anisotropies: (a) isotropic surface energy, i.e., $\gamma(\vec n)\equiv 1$; (b) cubic surface energy $\gamma(\vec n) = 1 + a(n_1^4 + n_2^4 + n_3^4)$ with $a$ representing the degree of anisotropy; (c) ellipsoidal surface energy $\gamma(\vec n) = \sqrt{a_1^2 n_1^2 + a_2 ^2 n_2^2 + a_3 ^2 n_3^2}$; (d) ``cusped'' surface energy defined as $\gamma(\vec n)=|n_1|+|n_2|+|n_3|$. In the application of materials science, the surface energy could be piecewise smooth and have some ``cusped'' points, where it is not differentiable \cite{Bao17,Peng98}. A typical example is the ``cusped'' surface energy defined above. For these cases, we can regularize the surface energy with a small parameter $0<\varepsilon\ll1$ to ensure the usage of sharp-interface approach proposed in this paper, i.e.,
\begin{equation}
\gamma(\vec n) = \sqrt{\varepsilon^2+(1-\varepsilon^2)n_1^2} + \sqrt{\varepsilon^2 + (1-\varepsilon^2)n_2^2} + \sqrt{\varepsilon^2 + (1-\varepsilon^2)n_3^2}.
\label{eqn:smoothcusp}
\end{equation}
Note that $\varepsilon$ is used to smooth the surface energy, and it could relate with the width (or scale) of the rounded corner in nanoparticles~\cite{Alpay15,Bao17}.

The Cahn-Hoffman $\boldsymbol{\xi}$-vector has been recently utilized to describe the solid-state dewetting problem in two dimensions (2D)~\cite{Jiang18}. Based on the thermodynamic variation, the authors derived a sharp-interface approach via the $\boldsymbol{\xi}$-vector formulation for describing the kinetic evolution of solid-state dewetting in 2D. In this approach, the moving interface is described as a parametrization over a time-independent domain, and the variation is performed by considering an infinitesimal perturbation with respect to an open interface curve coupled with contact points~\cite{Jiang18}. However, when we want to generalize this approach to 3D, we realize that it would be very different for
calculating the thermodynamic variation for the 3D problem by using the approach of parameterized surfaces. First, the calculations of the variation in 3D via surface parametrization approach would become complicated, extremely tedious and a nightmare, and it unavoidably involves in a lot of knowledge about differential geometry; Second, for the solid-state dewetting problem, the infinitesimal perturbation to a surface in the tangential direction plays an important role in investigating the contact line migration along the substrate~\cite{Jiang16,Bao17b}, and it would make the calculations become more complicated;
Third, complicated calculations often make people easily forget the nature of the problem, and we need to investigate and make use of the variational structure of the problem. These difficulties motivate us to look for a new approach to calculating the
thermodynamic variation of solid-state dewetting in 3D. In the literature, the shape optimization problem is popular in the design of industrial structures. The speed method and shape derivatives have been widely utilized to perform the shape sensitivity analysis of shape optimization problems~\cite{Soko92,Hintermuller2004,Dougan12}. This approach avoids the parametrization of a surface and is able to deal with perturbations along arbitrary directions, and it is the desired tool we are searching for.

Therefore, based on the $\boldsymbol{\xi}$-vector formulation and the speed method, the objectives of this paper are as follows: (i) to calculate the thermodynamic variation of the energy functional for solid-state dewetting in 3D; (ii) to provide a rigorous derivation of the thermodynamic description of the equilibrium shape for solid-state dewetting in 3D; (iii) to develop a sharp-interface model which includes surface diffusion and contact line migration for simulating kinetic evolution of solid-state dewetting in 3D; and (iv) to present numerical simulations to investigate important characteristics of the morphological evolution for solid-state dewetting observed in experiments.

The rest of the paper is organized as follows. In Section 2, we briefly introduce the speed method and sharp derivatives, and then apply them for calculating the first variation of the total free energy functional. In Section 3, we rigorously derive the necessary conditions for the equilibrium shape and explicitly give an expression for the equilibrium shape by using a parametric formula. In Section 4, based on thermodynamic variation, a sharp-interface model is proposed for simulating solid-state dewetting of thin films in 3D. Subsequently, we perform some numerical simulations to demonstrate the
performance of our proposed model in Section 5. Finally, we draw some conclusions in Section 6.

\section{Thermodynamic variation}

The solid-state dewetting problem can be illustrated as Fig.~\ref{fig:dewetting3d}, where a solid thin film (in blue) can dewet  or agglomerate on a flat rigid substrate (in gray) due to capillarity effects. The total interfacial free energy of the system can be written as~\cite{Bao17b,Jiang18}
\begin{equation*}
W=\int_{S_{_{\subFV}}}\gamma_{_{\subFV}}\;dS_{_{\subFV}} + \underbrace{\int_{S_{_{\subFS}}}\gamma_{_{\subFS}}\;dS_{_{\subFS}} + \int_{S_{_{\subVS}}}\gamma_{_{\subVS}}\;dS_{_{\subVS}}}_{\rm{Substrate}\;\rm{energy}},
\end{equation*}
\begin{figure}[!htp]
\centering
\includegraphics[width=.75\textwidth]{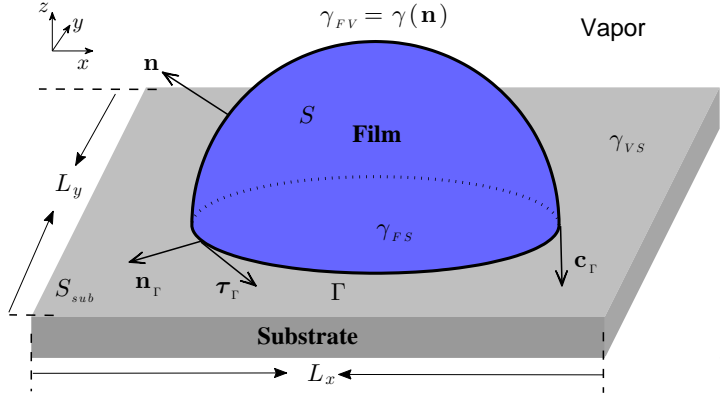}
\caption{A schematic illustration of the solid-state dewetting of a solid thin film (in blue) on a flat, rigid substrate (in gray) in 3D.}
\label{fig:dewetting3d}
\end{figure}
where $S_{_{\subFV}}:=S$, $S_{_{\subFS}}$ and $S_{_{\subVS}}$ represent the film/vapor, film/substrate and and vapor/substrate interfaces, respectively, and $\gamma_{_{\subFV}}$, $\gamma_{_{\subFS}}$ and $\gamma_{_{\subVS}}$ represent the corresponding surface energy densities. In solid-state dewetting problems, we often assume that $\gamma_{_{\subFS}},\gamma_{_{\subVS}}$ are two constants, and $\gamma_{_{\subFV}}$ is a function of the orientation of the film/vapor interface, i.e., $\gamma_{_{\subFV}}:=\gamma(\vec n)$ with $\vec n$ representing the unit normal vector of the film/vapor interface, which points outwards to the vapor phase. The film/vapor interface is here described by an open two-dimensional surface $S$ with boundary $\Gamma$ (i.e., the contact line), which is a closed plane curve on the flat substrate $S_{\rm{sub}}$.

Assume that we consider a bounded domain with size $L_x\times L_y$ on the substrate (shown in Fig.~\ref{fig:dewetting3d}). If we label the surface area enclosed by the contact line $\Gamma$ as $A(\Gamma)$, then the total interfacial free energy of the system can be calculated as
\begin{eqnarray}
W&=&\int_{S_{_{\subFV}}}\gamma_{_{\subFV}}\;dS_{_{\subFV}} + \int_{S_{_{\subFS}}}\gamma_{_{\subFS}}\;dS_{_{\subFS}} + \int_{S_{_{\subVS}}}\gamma_{_{\subVS}}\;dS_{_{\subVS}}\nonumber\\
&=&\int_S\gamma(\vec n)\;dS+(\gamma_{_{\subFS}}-\gamma_{_{\subVS}})A(\Gamma) + L_x\,L_y\,\gamma_{_{\subVS}}. \nonumber
\end{eqnarray}
By dropping off the constant term $L_xL_y\gamma_{_{\subVS}}$, we can simplify the total interfacial free energy (still labeled as $W$) as the following two parts, i.e., the film/vapor interface energy term $W_{\rm{int}}$ and the substrate energy term $W_{\rm{sub}}$,
\begin{equation}
W=W_{\rm{int}} + W_{\rm{sub}}=\int_{S}\gamma(\vec n)\;dS + (\gamma_{_{\subFS}} - \gamma_{_{\subVS}})A(\Gamma).
\label{eqn:totalenergy}
\end{equation}

As shown in Fig.~\ref{fig:dewetting3d}, we introduce three unit vectors $\vec n_{_\Gamma}$, $\boldsymbol{\tau}_{_\Gamma}$ and $\vec c_{_\Gamma}$, which are defined along the boundary $\Gamma$. More precisely, $\vec n_{_\Gamma}$ is the outer unit normal vector of the plane curve $\Gamma$ on the substrate $S_{\rm{sub}}$; $\boldsymbol{\tau}_{_\Gamma}$ is the unit tangent vector of $\Gamma$ on the substrate $S_{\rm{sub}}$, which points anticlockwise when looking from top to bottom; $\vec c_{_\Gamma}$ is called as the co-normal vector, which is normal to $\Gamma$ and tangent to the surface $S$, and points downwards. For any point $\vec x\in S$ (with $\vec x=(x_1,x_2,x_3)^T$ or $(x,y,z)^T$), if we label $\mathcal{T}_{\vec x}\,S$ as the tangent vector space to $S$ at $\vec x$, then the following properties are valid:
\begin{eqnarray}
&&\boldsymbol{\tau}_{_\Gamma}(\vec x)\in\mathcal{T}_{\vec x}\,S,\qquad\boldsymbol{\tau}_{_\Gamma}(\vec x)\sslash S_{\rm{sub}},\qquad\boldsymbol{\tau}_{_\Gamma}(\vec x)\perp\vec c_{_\Gamma}(\vec x),\quad\forall\vec x\in \Gamma,\nonumber\\
&&\vec c_{_\Gamma}(\vec x)\in\mathcal{T}_{\vec x}\,S,\qquad\vec n_{_\Gamma}(\vec x)\sslash S_{\rm{sub}},\qquad\vec n_{_\Gamma}(\vec x)\perp\boldsymbol{\tau}_{_\Gamma}(\vec x),\quad\forall\vec x\in \Gamma.   \nonumber
\end{eqnarray}

\subsection{Differential operators on manifolds}

To obtain the first variation of the above shape functional~\eqref{eqn:totalenergy}, we start by introducing some basic knowledge about surface calculus. For more details, readers could refer to~\cite{Dziuk13,Deckelnick05}.

\begin{defi}
\label{defi:calculus}
Suppose that $S\subset\mathbb{R}^3$ is a two-dimensional smooth manifold, and a function $f$ is defined on $S$ such that $f\in C^2(S)$. Let $\vec n=(n_1,~n_2,~n_3)^T$ be the unit outer normal vector of $S$, and $\bar{f}$ be an extension of $f$ in the neighbourhood of $S$ such that $\bar{f}$ is differentiable, then the surface gradient of $f$ on $S$ is defined as
\begin{equation}
\nabla_{_S} f=\nabla\bar{f}-(\nabla \bar{f}\cdot \vec n)\,\vec n,
\end{equation}
with $\nabla$ denoting the usual gradient in $\mathbb{R}^3$. It is easy to show that $\nabla_{_S}f$ is independent of the extension of $f$ and only dependent on the value of $f$ on $S$. If we denote $\nabla_{_S}$ as a vector operator
\begin{equation}
\nabla_{_S} = (\underline{D}_1,~\underline{D}_2,~\underline{D}_3)^T,
\end{equation}
then we can easily obtain
\begin{equation}
\underline{D}_i x_j=\delta_{ij}-n_i\,n_j,\qquad\forall 1\leq i,\,j\leq 3,
\end{equation}
where $\vec x=(x_1,x_2,x_3)$ is the position vector of the surface and $\delta_{ij}$ is the Kronecker delta. The surface divergence of a vector-valued function $\vec g=(g_1,~g_2,~g_3)^T\in [C^1(S)]^3$ is defined as
\begin{equation}
\nabla_{_S}\cdot\vec g = \sum_{i=1}^3\underline{D}_i\,g_i.
\end{equation}
Moreover, the Laplace-Beltrami operator on $S$ can be expressed as
\begin{equation}
\Delta_{_S}\,f=\nabla_{_S}\cdot(\nabla_{_S}\,f) = \sum_{i=1}^3\underline{D}_i \underline{D}_if.
\end{equation}
\end{defi}
In the definition of surface gradient, since the normal component has been subtracted from $\nabla\bar{f}$, $\nabla_{_S}f$ can be viewed as the tangential component of $\nabla\bar{f}$, and thus we have $\nabla_{_S}f \cdot {\vec n}=0$ and
$\nabla_{_S}f(\vec x)\in\mathcal{T}_{\vec x}\,S,\;\forall\vec x\in S$. Note that it can be rigorously proved that Definition~\ref{defi:calculus} is consistent with the conventional definition in differential geometry~\cite{Dziuk13},
and it generalizes the definition domian of surface divergence from the vector in tangent vector spaces to any vector in $\mathbb{R}^3$.

On the other hand, the integration by parts on an open smooth surface $S$ with smooth boundary $\Gamma$ reads as (see Theorem 2.10 in \cite{Dziuk13}, and we omit the proof here)
\begin{equation}
\int_S\nabla_{_S} f\;dS=\int_S f\,\mathcal{H}\,\vec n\;dS+\int_\Gamma f\,\vec c_{_\Gamma}\;d\Gamma,
\label{eqn:InPasurface}
\end{equation}
where $\vec n$ and $\vec c_{_\Gamma}$ are the normal and co-normal vectors (shown in Fig.~\ref{fig:dewetting3d}), respectively, and $\mathcal{H}$ is the mean curvature, which is defined as the surface divergence of the unit normal vector, i.e., $\mathcal{H}=\nabla_{_S}\cdot\vec n$. Similarly, by using the above equation and Definition~\ref{defi:calculus}, we can obtain
the integration by parts about a vector field $\vec F=(f_1,f_2,f_3)^{T} \in \mathbb{R}^3$ defined on an open smooth surface $S$ with smooth boundary $\Gamma$,
\begin{equation}
\int_S\nabla_{_S}\cdot \vec{F}\;dS=\int_S \mathcal{H}\,\vec F \cdot \vec n\;dS+\int_\Gamma \vec F \cdot \vec c_{_\Gamma}\;d\Gamma.
\label{eqn:divergencesurface}
\end{equation}
If $\vec F$ lies in the tangent vector space of $S$, i.e., $\vec F \cdot \vec n=0$, then the first term on the right will vanish.

Furthermore, by using the product rule that $\nabla_{_S}(fg)=g\,\nabla_{_S} f +f\,\nabla_{_S} g$, we can obtain
\begin{equation}
\int_S g\,\nabla_{_S} f \;dS=-\int_S f\,\nabla_{_S} g\;dS+\int_S f\,g\,\mathcal{H}\,\vec n\;dS+\int_\Gamma f\,g\,\vec c_{_\Gamma}\;d\Gamma.
\label{eqn:integralparts}
\end{equation}
In a simple case, if $S$ is a flat surface (i.e., $\mathcal{H}=0$) with a plane boundary curve $\Gamma$, then \eqref{eqn:InPasurface} reduces to
\begin{equation*}
\int_S\nabla_{_S} f\;dS=\int_\Gamma f\,\vec c_{_\Gamma}\;d\Gamma,
\end{equation*}
which is the Gauss-Green theorem in the multivariable calculus, because $\nabla_{_S}f$  collapses to the gradient of $f$ in 2D, and $\vec c_{_\Gamma}$ collapses to the unit outer normal vector of $\Gamma$.

\subsection{The speed method and shape derivative}

In this section, the objective is to calculate the first variation of the energy (or shape) functional defined in \eqref{eqn:totalenergy}. To this end, we first introduce an independent parameter $\epsilon\in[0, \epsilon_0)$
to parameterize a family of perturbations of a given domain $D\subset \mathbb{R}^3$, where the parameter $\epsilon$ controls the amplitude of the perturbation and $\epsilon_0$ is the maximum perturbation amplitude.
Furthermore, we assume that the domain $D$ is of class $C^k$ with $k \geq 2$.

More precisely, we consider a domain $D\subset \mathbb{R}^3$ with a piecewise smooth boundary $\partial D$, then
we can construct a family of transformations $T_\epsilon$, which are one-to-one, and $T_\epsilon$ maps from $\bar{D}$ onto $\bar{D}$,
i.e.,
\begin{equation}
T_\epsilon:\;\bar{D}\;\rightarrow\;\bar{D}, \qquad\epsilon\in[0, \epsilon_0),
\end{equation}
where $\epsilon$ is the small perturbation parameter. Generally, we assume that: (i)~$T_\epsilon$ and $T_\epsilon^{-1}$ belong to $C^k(\bar{D},\mathbb{R}^3)$ for all $\epsilon\in[0,\epsilon_0)$ with $k \geq 2$; and (ii)~the mappings
$\epsilon \rightarrow T_\epsilon(\vec x)$ and $\epsilon \rightarrow T_\epsilon^{-1}(\vec x)$ belongs to $C^1[0,\epsilon_0)$ for all $\vec x\in \mathbb{R}^3$ (with $\vec x=(x_1,x_2,x_3)^T$).

Given any point $\vec X\in\bar{D}$ (with $\vec X=(X_1,X_2,X_3)^T$) and $\epsilon\in[0, \epsilon_0)$, we can define the point $\vec x = T_\epsilon(\vec X)$ which moves along the trajectory. Here, the point $\vec X$ represents the Lagrangian (or material) coordinate, while $\vec x$
is the Eulerian (or actual) coordinate. Therefore, the speed vector field $\vec V(\vec x, \epsilon)$ at point $\vec x$ is defined as
\begin{equation}
\vec V(\vec x,\epsilon)=\frac{\partial \vec x}{\partial \epsilon}(T_\epsilon^{-1}(\vec x),\epsilon).
\end{equation}
On the other hand, the transformation $T_{\epsilon}$ can be uniquely determined by the speed vector field $\vec V$ via the following ordinary differential equation (ODE)
\begin{equation}
\begin{cases}
\frac{d}{d\epsilon}\vec x(\vec X,\epsilon)=\vec V(\vec x(\vec X,\epsilon),\epsilon),\cr
\vec x(\vec X,0)=\vec X.
\end{cases}
\end{equation}
Therefore, the transformation $T_\epsilon$ and the smooth vector field $\vec V$ are uniquely determined by each other.
For a smooth vector field $\vec V$, e.g., $\vec V\in C(C^k(\bar{D},\mathbb{R}^3);[0,\epsilon_0))$, the equivalence between
the transformation $T_\epsilon$ and the speed vector field $\vec V$ has been strictly established by Theorem 2.16 in~\cite{Soko92}. In the following, we use $T_\epsilon(\vec V)$ to denote the transformation associated with vector field $\vec V$. For simplicity, we also denote $\vec V_0=\vec V(\vec X,0)$.

Let $J(G)$ be a shape functional defined on a shape $G \subset \bar{D}$, where $G$ could be a three-dimensional domain (e.g., $\Omega$) or a two-dimensional manifold (e.g., a surface $S$).  The first variation of the functional $J(G)$ at $G$ in the direction of a speed vector field $\vec V\in C(C^k(\bar{D},\bar{D});[0,\epsilon_0))$ is given as the Eulerian derivative:
\begin{equation}
 \delta J(G;\vec V)=\lim_{\epsilon\rightarrow 0}\frac{J(G_\epsilon)-J(G)}{\epsilon},
 \end{equation}
where $G_\epsilon=T_\epsilon(\vec V)(G)$. To obtain the first variation and based on the transformation, we first define the material derivative and shape derivative of a function on a domain $\Omega$ or a surface $S$. For more details about the shape differential calculus, we refer to the book by Sokolowski and Zolesio~\cite{Soko92}.

\begin{defi}\label{defi:material}({\textbf{Material derivatives}}, Def.~2.71 and Def.~2.74 in~\cite{Soko92})
The material derivative $\dot{\psi}(\Omega;\vec V)$ of $\psi$ {\textbf{on a domain $\Omega$}} in the direction of a speed vector field $\vec V$ is defined as
\begin{equation}
 \dot{\psi}(\Omega;\vec V)=\lim_{\epsilon\rightarrow 0}\frac{\psi(\Omega_\epsilon)\circ T_{\epsilon}(\vec V)-\psi(\Omega)}{\epsilon},
\end{equation}
where for $\vec X \in \Omega$, $\psi(\Omega_\epsilon)\circ T_{\epsilon}(\vec V)=\psi(T_{\epsilon}(\vec X))$.

Similarly, the material derivative $\dot{\varphi}(S;\vec V)$ of $\varphi$ {\textbf{on a surface $S$}} in the direction $\vec V$ is defined as
\begin{equation}
 \dot{\varphi}(S;\vec V)=\lim_{\epsilon\rightarrow 0}\frac{\varphi(S_\epsilon)\circ T_{\epsilon}(\vec V)-\varphi(S)}{\epsilon},
\end{equation}
where for $\vec X \in S$, $\varphi(S_\epsilon)\circ T_{\epsilon}(\vec V)=\varphi(T_{\epsilon}(\vec X))$.
\end{defi}

\begin{defi}\label{defi:surfaceshapederivative}
({\textbf{Shape derivatives}}, Def.~2.85 and Def.~2.88 in~\cite{Soko92})
The shape derivative $\psi'(\Omega;\vec V)$ of $\psi$ defined {\textbf{on a domain $\Omega$}}
in the direction $\vec V$ is defined as
\begin{equation}
 \psi'(\Omega;\vec V)=\dot{\psi}(\Omega;\vec V)-\nabla \psi(\Omega)\cdot\vec V_0.
\label{eqn:domainshapeder}
\end{equation}

Similarly, the shape derivative $\varphi'(S;\vec V)$ of $\varphi$ defined {\textbf{on a surface $S$}}
in the direction $\vec V$ is defined as
\begin{equation}
\varphi'(S;\vec V)=\dot{\varphi}(S;\vec V)-\nabla_{_S}\varphi(S)\cdot\vec V_0.
\label{eqn:shapeder}
\end{equation}
\end{defi}

\begin{prop}\label{lm:domainvariation}
Let $\Omega\subset \mathbb{R}^3$ be a smooth bounded domain in $\bar{D}$ with smooth boundary $\partial \Omega$, and $\vec V$ be a speed vector field such that $\vec V\in C(C^k(\bar{D},\bar{D});[0,\epsilon_0))$. Suppose that $\psi=\psi(\Omega)$ is given such that the material derivative $\dot{\psi}(\Omega;\vec V)$ and the shape derivative $\psi'(\Omega;\vec V)$ exist.
Then, the shape functional $J(\Omega)=\int_\Omega \psi(\Omega)\,d\Omega$ is shape differentiable and
we have
\begin{equation}
\delta J(\Omega; \vec V)=\int_\Omega \psi'(\Omega;\vec V)\;d\Omega + \int_{\partial\Omega}\psi(\Omega)\,\vec V_0\cdot\vec n\;d\Omega.
\label{eqn:domainvariation}
\end{equation}
\end{prop}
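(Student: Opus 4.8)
The plan is to transport the integral over the deformed domain $\Omega_\epsilon$ back to the fixed reference domain $\Omega$ via the change of variables $\vec x = T_\epsilon(\vec V)(\vec X)$, differentiate the resulting $\epsilon$-dependent integrand pointwise at $\epsilon=0$, and then reorganize the outcome into a divergence that the divergence theorem converts into the boundary integral in \eqref{eqn:domainvariation}. Writing $J_\epsilon(\vec X)=\det\bigl(D T_\epsilon(\vec X)\bigr)$ for the Jacobian of the transformation, the change of variables gives
\[
J(\Omega_\epsilon)=\int_{\Omega_\epsilon}\psi(\Omega_\epsilon)\;d\Omega=\int_\Omega\bigl[\psi(\Omega_\epsilon)\circ T_\epsilon(\vec V)\bigr]\,J_\epsilon\;d\Omega,
\]
where $J_\epsilon>0$ for small $\epsilon$ because $J_0\equiv 1$ and $\epsilon\mapsto J_\epsilon$ is continuous. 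The crucial observation is that the factor $\psi(\Omega_\epsilon)\circ T_\epsilon(\vec V)$ is precisely the quantity whose $\epsilon$-derivative at $\epsilon=0$ defines the material derivative in Definition~\ref{defi:material}.

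Next I would form the difference quotient for $\delta J(\Omega;\vec V)$, interchange the limit with the integral, and differentiate the integrand $[\psi(\Omega_\epsilon)\circ T_\epsilon(\vec V)]\,J_\epsilon$ at $\epsilon=0$ by the product rule. The first factor contributes the material derivative $\dot\psi(\Omega;\vec V)$; the second requires $\tfrac{d}{d\epsilon}J_\epsilon$. Using the flow ODE $\tfrac{d}{d\epsilon}\vec x=\vec V(\vec x,\epsilon)$ together with Jacobi's formula for the derivative of a determinant yields $\tfrac{d}{d\epsilon}J_\epsilon=J_\epsilon\,(\nabla\cdot\vec V)\circ T_\epsilon$, so at $\epsilon=0$ this collapses to $\nabla\cdot\vec V_0$. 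Combining the two contributions gives the intermediate identity
\[
\delta J(\Omega;\vec V)=\int_\Omega\Bigl[\dot\psi(\Omega;\vec V)+\psi(\Omega)\,\nabla\cdot\vec V_0\Bigr]\;d\Omega.
\]

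Finally I would eliminate the material derivative in favour of the shape derivative. Substituting the defining relation \eqref{eqn:domainshapeder}, namely $\dot\psi(\Omega;\vec V)=\psi'(\Omega;\vec V)+\nabla\psi(\Omega)\cdot\vec V_0$, turns the bracket into $\psi'(\Omega;\vec V)+\nabla\psi\cdot\vec V_0+\psi\,\nabla\cdot\vec V_0$, whose last two terms merge through the product rule into the single divergence $\nabla\cdot(\psi\,\vec V_0)$. Applying the divergence theorem to this term over $\Omega$ then converts it into $\int_{\partial\Omega}\psi\,\vec V_0\cdot\vec n\;dS$, which delivers exactly \eqref{eqn:domainvariation}.

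I expect the main obstacle to be the differentiation of the Jacobian determinant together with the rigorous passage of the $\epsilon$-derivative inside the integral. The formula for $\tfrac{d}{d\epsilon}J_\epsilon$ rests on Jacobi's identity and on expressing the $\vec X$-derivative of the velocity in Eulerian coordinates, so that the trace appearing there reduces to $\nabla\cdot\vec V$; and the interchange of limit and integral must be justified by dominated convergence, which is available thanks to the assumed $C^k$ regularity of $T_\epsilon$ with $k\ge 2$ and the $C^1$ dependence of $T_\epsilon(\vec x)$ on $\epsilon$, guaranteeing that the difference quotients of the integrand are uniformly bounded on the compact set $\bar\Omega$. Once these two analytic points are secured, the remaining steps are elementary algebra and a single application of the divergence theorem.
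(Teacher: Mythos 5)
Your proof is correct and is essentially the standard transport-theorem argument (pull back by $T_\epsilon$, differentiate the Jacobian via Jacobi's formula, convert material to shape derivative, apply the divergence theorem); the paper does not prove this proposition itself but simply cites Section~2.31 of Sokolowski--Zolesio, where exactly this derivation is carried out. Your added remarks on justifying the interchange of limit and integral are the right technical caveats and nothing is missing.
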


\begin{proof}
See Section 2.31 on Pages 112-113 in~\cite{Soko92}.
\end{proof}

\begin{remark}
If~$\vec V_0\cdot\vec n=0$ on the boundary $\partial\Omega$, we obtain that the first variation of the functional reduces to
\begin{equation}
\delta J(\Omega;\vec V) = \int_\Omega \psi'(\Omega;\vec V)\;d\Omega.
\end{equation}
\end{remark}

Furthermore, the definition of shape derivative for a function $\varphi(S)$ defined over a two-dimensional manifold $S$, ensures that the shape derivative shows no dependence on the extension of $\varphi$ in the near neighbourhood. We propose the following proposition to show that the first variation of a functional on $S$ is closely related to the shape derivative.

\begin{prop}\label{lm:surfaceintegralvariation}
Let $S$ be a two-dimensional smooth manifold in $\bar{D}$ with smooth boundary $\Gamma$, and $\vec V$ be a speed vector field such that $\vec V\in C(C^k(\bar{D},\bar{D});[0,\epsilon_0))$. Suppose that $\varphi=\varphi(S)$ is given such that the material derivative $\dot{\varphi}(S;\vec V)$ and the shape derivative $\varphi'(S;\vec V)$ exist.
Then, the shape functional $J(S)=\int_S \varphi(S)\,dS$ is shape differentiable and we have
 \begin{equation}
 \delta J(S;\vec V)=\int_S\varphi'(S;\vec V)\;dS+\int_S\varphi(S)\mathcal{H}\vec V_0\cdot\vec n\;dS+\int_\Gamma\varphi(S)\vec V_0\cdot\vec c_{_\Gamma}\;d\Gamma,
 \label{eqn:shapelemma}
 \end{equation}
 where $\mathcal{H}$ is the mean curvature of the surface $S$, and $\vec c_{_\Gamma}$ is the unit co-normal vector.
 Furthermore, if $\varphi(S)=\psi(\Omega)\Big|_{S}$, then we have
 \begin{equation}
\delta J(S;\vec V)=\int_S \psi'(\Omega;\vec V)\Big|_S\;dS+\int_S\Bigl(\frac{\partial \psi}{\partial \vec n}+ \psi \mathcal{H}\Bigr)\vec V_0\cdot\vec n\;dS+\int_\Gamma \psi\vec V_0\cdot\vec c_{_\Gamma}\;d\Gamma.
\label{eqn:shapefunction}
\end{equation}
 \end{prop}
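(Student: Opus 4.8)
The plan is to reduce the Eulerian derivative to an integral over the fixed reference surface $S$ by a change of variables, and then to differentiate under the integral sign. First I would write $J(S_\epsilon)=\int_{S_\epsilon}\varphi(S_\epsilon)\,dS_\epsilon$ and pull this back onto $S$ through the transformation $T_\epsilon(\vec V)$. Under $\vec x=T_\epsilon(\vec X)$ the surface element transforms by the tangential Jacobian $\omega(\epsilon)=\det(DT_\epsilon)\,\lvert(DT_\epsilon)^{-T}\vec n\rvert$, so that $J(S_\epsilon)=\int_S \bigl(\varphi(S_\epsilon)\circ T_\epsilon\bigr)\,\omega(\epsilon)\,dS$. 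Since $S$ no longer depends on $\epsilon$, differentiating at $\epsilon=0$ and using $\omega(0)=1$ together with the product rule gives $\delta J(S;\vec V)=\int_S\dot\varphi(S;\vec V)\,dS+\int_S\varphi(S)\,\omega'(0)\,dS$, where the first term is exactly the material derivative of Definition~\ref{defi:material}.

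The crux of the argument, and the step I expect to be the main obstacle, is the evaluation $\omega'(0)=\nabla_{_S}\cdot\vec V_0$. This requires differentiating both the determinant and the normal-stretch factor at $\epsilon=0$: one obtains $\tfrac{d}{d\epsilon}\det(DT_\epsilon)\big|_{0}=\nabla\cdot\vec V_0$ and a correction $-\vec n\cdot(D\vec V_0)\vec n$ from the factor $\lvert(DT_\epsilon)^{-T}\vec n\rvert$, and the two combine into the surface divergence $\nabla_{_S}\cdot\vec V_0=\nabla\cdot\vec V_0-\vec n\cdot(D\vec V_0)\vec n$ via Definition~\ref{defi:calculus}. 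This is the only genuinely differential-geometric computation; the remainder is algebraic. Once it is in hand, I would substitute the shape-derivative identity~\eqref{eqn:shapeder}, namely $\dot\varphi=\varphi'(S;\vec V)+\nabla_{_S}\varphi\cdot\vec V_0$, to rewrite the integrand as $\varphi'(S;\vec V)+\nabla_{_S}\varphi\cdot\vec V_0+\varphi\,\nabla_{_S}\cdot\vec V_0=\varphi'(S;\vec V)+\nabla_{_S}\cdot(\varphi\vec V_0)$, using the product rule for the surface divergence. Applying the surface divergence theorem~\eqref{eqn:divergencesurface} to the field $\vec F=\varphi\vec V_0$ then splits $\int_S\nabla_{_S}\cdot(\varphi\vec V_0)\,dS$ into the mean-curvature term $\int_S\varphi\mathcal{H}\vec V_0\cdot\vec n\,dS$ and the boundary term $\int_\Gamma\varphi\vec V_0\cdot\vec c_{_\Gamma}\,d\Gamma$, which yields~\eqref{eqn:shapelemma}.

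For the second assertion, with $\varphi(S)=\psi(\Omega)\big|_S$, the plan is to convert the surface shape derivative $\varphi'(S;\vec V)$ into the domain shape derivative $\psi'(\Omega;\vec V)$. The key observation is that the material derivative is insensitive to whether $\psi$ is regarded on $\Omega$ or restricted to $S$, since it only tracks values along the trajectory $\epsilon\mapsto T_\epsilon(\vec X)$; hence $\dot\varphi(S;\vec V)=\dot\psi(\Omega;\vec V)\big|_S$. Combining the two shape-derivative definitions~\eqref{eqn:domainshapeder} and~\eqref{eqn:shapeder} with the splitting $\nabla_{_S}\psi\cdot\vec V_0=\nabla\psi\cdot\vec V_0-(\partial\psi/\partial\vec n)(\vec V_0\cdot\vec n)$ that follows from $\nabla_{_S}\psi=\nabla\psi-(\nabla\psi\cdot\vec n)\vec n$, I obtain $\varphi'(S;\vec V)=\psi'(\Omega;\vec V)\big|_S+\tfrac{\partial\psi}{\partial\vec n}\,\vec V_0\cdot\vec n$. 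Substituting this into~\eqref{eqn:shapelemma} and merging the two $\vec V_0\cdot\vec n$ contributions produces the coefficient $\tfrac{\partial\psi}{\partial\vec n}+\psi\mathcal{H}$, which gives~\eqref{eqn:shapefunction} and completes the proof.
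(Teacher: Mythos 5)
Your proof is correct and follows essentially the same route as the paper: the paper obtains the transport identity $\delta J(S;\vec V)=\int_S\dot\varphi\,dS+\int_S\varphi\,\nabla_{_S}\cdot\vec V_0\,dS$ by citing Sokolowski--Zolesio and then integrates by parts exactly as you do, and it handles the restriction $\varphi=\psi(\Omega)\big|_S$ via the same comparison of the two shape-derivative definitions. The only difference is that you actually derive the cited identity from the tangential Jacobian $\omega(\epsilon)=\det(DT_\epsilon)\,\lvert(DT_\epsilon)^{-T}\vec n\rvert$ and the computation $\omega'(0)=\nabla_{_S}\cdot\vec V_0$, which the paper leaves to the reference.
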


\begin{proof}
By referring to Section 2.33 on Pages 115-116 in~\cite{Soko92}, we can directly obtain
\begin{equation}
\delta J(S;\vec V)
=\int_S \dot{\varphi}(S;\vec V)\;dS +\int_S\varphi(S)\nabla_{_S}\cdot\vec V_0\;dS.
\label{eqn:firststepshape}
\end{equation}
By using integration by parts and also making use of \eqref{eqn:shapeder}, we obtain
\begin{eqnarray}
\delta J(S;\vec V)&=&\int_S\dot{\varphi}(S;\vec V_0)\;dS-\int_S \nabla_{_S}\varphi(S)\cdot\vec V_0\;dS+\int_S\varphi(S)\mathcal{H}\vec V_0\cdot \vec n \;dS\nonumber\\
&&+\int_\Gamma\varphi(S)\vec V_0\cdot\vec c_{_\Gamma}\;d\Gamma\nonumber\\
&=&\int_S\varphi'(S;\vec V)\;dS+\int_S\varphi(S)\mathcal{H}\vec V_0\cdot\vec n\;dS+\;\int_\Gamma\varphi(S)\vec V_0\cdot\vec c_{_\Gamma}\;d\Gamma. \label{eqn:shapelemma2}
\end{eqnarray}

Furthermore, notice that the definitions of shape derivatives on the domain $\Omega$ and the surface $S$ are different
(see \eqref{eqn:domainshapeder}-\eqref{eqn:shapeder}). If we assume that $\psi$ is a function defined on the domain $\Omega$, such that its restriction on $S$ equals to the function $\varphi(S)$, i.e., $\psi(\Omega)\Big|_{S} = \varphi(S)$, we can reformulate \eqref{eqn:shapelemma2} in terms of the extension function $\psi(\Omega)$ as
\begin{equation*}
\delta J(S;\vec V)=\int_S \psi'(\Omega;\vec V)\Big|_S\;dS+\int_S\Bigl(\frac{\partial \psi}{\partial \vec n}+ \psi \mathcal{H}\Bigr)\vec V_0\cdot\vec n\;dS+\int_\Gamma \psi\vec V_0\cdot\vec c_{_\Gamma}\;d\Gamma,
\end{equation*}
which completes the proof.
\end{proof}

\vspace{0.1cm}
\begin{remark}
If $S$ is a closed surface, then the boundary term about $\Gamma$ in~\eqref{eqn:shapefunction} will vanish.
The similar results for a closed curve or surface can be found in~\cite{Dougan12,Hintermuller2004}.
\end{remark}

In the following, we will apply~\eqref{eqn:shapefunction} in Proposition~\ref{lm:surfaceintegralvariation} for calculating the first variation of the energy (or shape) functional defined in \eqref{eqn:totalenergy}, where the integrand is the surface energy density $\gamma(\vec n)$. To calculate the shape derivatives and obtain the first variation, we shall make use of the signed distance function, which is a powerful tool in shape sensitivity analysis. Consider a closed domain $\Omega \subset \mathbb{R}^3$ with a smooth boundary surface $\partial\Omega$, and then the signed distance function is defined as
\begin{equation}
\label{eqn:signdistance}
b(\vec x) =
\begin{cases}
\rm{dist}(\vec x,\partial\Omega),\quad&\forall\vec x\in \mathbb{R}^3\backslash\Omega,
\cr
0,\qquad&\forall\vec x\in \partial\Omega,\cr
-\rm{dist}(\vec x,\partial\Omega),\quad&\forall\vec x\in\Omega.
\end{cases}
\end{equation}
Here, $\rm{dist}(\vec x,\partial\Omega)=\inf_{\vec y\in\partial\Omega}||\vec x-\vec y||$. The signed distance function $b(\vec x)$ can be used to determine the unit outer normal vector $\vec n$ and the mean curvature $\mathcal{H}$ on the boundary surface $\partial\Omega$. More precisely, we can extend the functions $\vec n$ and $\mathcal{H}$ which are defined on $\partial\Omega$  in terms of $b(\vec x)$ in a tubular neighbourhood such that
\begin{equation}
\label{eqn:normcurvature}
\vec n(\vec x) = \nabla b(\vec x)\Big|_{\partial\Omega},\qquad\mathcal{H}(\vec x) = \Delta b(\vec x)\Big|_{\partial\Omega},\qquad\forall\vec x\in\partial\Omega.
\end{equation}
The shape derivative of the signed distance function in the direction of a vector field $\vec V$ is calculated as $b'(\Omega;\vec V)=-\vec V_0\cdot\vec n$ (see~\cite{Dougan12,Hintermuller2004} for more details). Moreover, based on the extension, the shape derivatives of the two extension functions restricted on $\partial\Omega$ are also obtained (see Lemma~3.1 in~\cite{Dougan12}), i.e.,
\begin{equation}
\vec n'(\Omega;\vec V)\Big|_{\partial \Omega}=-\nabla_{_S}(\vec V_0\cdot\vec n),\qquad \mathcal{H}'(\Omega;\vec V)\Big|_{\partial \Omega}=-\Delta_S(\vec V_0\cdot\vec n).
\label{eqn:shapenormal}
\end{equation}

\subsection{First variation}

By applying~\eqref{eqn:shapefunction} and making use of the shape derivative of the unit outer normal vector, we obtain the following lemma.

\begin{lem}\label{lem:variationsurfaceenergies}
Assume that $S \subset \bar{D}$ is a two-dimensional smooth manifold with smooth boundary $\Gamma$. Let $\vec n$ be the unit outer normal vector of $S$, and  $\vec V$ be a speed vector field such that $\vec V\in C(C^k(\bar{D},\bar{D});[0,\epsilon_0))$. If the shape functional $J(S)=\int_S\gamma(\vec n)\;dS$ with a surface energy (density) $\gamma(\vec n)$, then the first variation of $J(S)$ is given as
\begin{equation}
\delta J(S;\vec V) = \int_S\,(\nabla_{_S}\cdot\boldsymbol{\xi})\,(\vec V_0\cdot\vec n)\,dS + \int_\Gamma\,\vec V_0 \cdot \vec c_{_\Gamma}^\gamma\,d\Gamma,
\label{eqn:Wvariation}
\end{equation}
where $\boldsymbol{\xi}:=\boldsymbol{\xi}(\vec n)$ is the Cahn-Hoffman vector, which is defined previously in \eqref{eqn:xivector}, and $\vec V_0\cdot \vec n$ represents the deformation velocity along the outer normal direction of the interface $S$, and the vector $\vec c_{_\Gamma}^\gamma:=(\boldsymbol{\xi}\cdot\vec n)\,\vec c_{_\Gamma}-(\boldsymbol{\xi}\cdot\vec c_{_\Gamma})\,\vec n$ with $\vec c_{_\Gamma}$ representing the unit co-normal vector
(shown in Fig.~\ref{fig:dewetting3d}).
\end{lem}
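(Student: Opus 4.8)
The plan is to apply the general surface first-variation formula \eqref{eqn:shapelemma} of Proposition~\ref{lm:surfaceintegralvariation} with the integrand $\varphi(S)=\gamma(\vec n)$, and then reduce everything to the Cahn-Hoffman vector by one integration by parts. The starting point reads
\begin{equation*}
\delta J(S;\vec V)=\int_S \gamma(\vec n)'(S;\vec V)\;dS+\int_S\gamma(\vec n)\,\mathcal{H}\,\vec V_0\cdot\vec n\;dS+\int_\Gamma\gamma(\vec n)\,\vec V_0\cdot\vec c_{_\Gamma}\;d\Gamma,
\end{equation*}
so the first task is to compute the shape derivative $\gamma(\vec n)'(S;\vec V)$ of the normal-dependent density.

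First I would handle this shape derivative by the chain rule. Since $\gamma(\vec n)=\hat\gamma(\vec n)$ on the unit sphere and $\hat\gamma$ is a fixed, shape-independent function, the composition rule gives $\gamma(\vec n)'(S;\vec V)=\nabla\hat\gamma(\vec n)\cdot\vec n'(\Omega;\vec V)\big|_{\partial\Omega}=\boldsymbol{\xi}\cdot\vec n'$ by the definition \eqref{eqn:xivector}. Substituting the known shape derivative of the unit normal from \eqref{eqn:shapenormal}, namely $\vec n'(\Omega;\vec V)\big|_{\partial\Omega}=-\nabla_{_S}(\vec V_0\cdot\vec n)$, I obtain $\gamma(\vec n)'(S;\vec V)=-\boldsymbol{\xi}\cdot\nabla_{_S}(\vec V_0\cdot\vec n)$.

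Next I would integrate by parts to move the surface gradient off $\vec V_0\cdot\vec n$. Writing $f=\vec V_0\cdot\vec n$ and combining the product rule $\nabla_{_S}\cdot(f\boldsymbol{\xi})=\boldsymbol{\xi}\cdot\nabla_{_S}f+f\,\nabla_{_S}\cdot\boldsymbol{\xi}$ with the surface divergence theorem \eqref{eqn:divergencesurface} applied to $\vec F=f\boldsymbol{\xi}$, the interior term becomes
\begin{equation*}
-\int_S\boldsymbol{\xi}\cdot\nabla_{_S}f\;dS=\int_S(\nabla_{_S}\cdot\boldsymbol{\xi})\,f\;dS-\int_S\mathcal{H}\,f\,(\boldsymbol{\xi}\cdot\vec n)\;dS-\int_\Gamma f\,(\boldsymbol{\xi}\cdot\vec c_{_\Gamma})\;d\Gamma.
\end{equation*}
Now I would invoke the identity $\boldsymbol{\xi}\cdot\vec n=\gamma(\vec n)$ from \eqref{eqn:xiequation}: the newly generated curvature term $-\int_S\mathcal{H}\,f\,(\boldsymbol{\xi}\cdot\vec n)\,dS$ cancels exactly against the $\int_S\gamma(\vec n)\,\mathcal{H}\,\vec V_0\cdot\vec n\,dS$ inherited from \eqref{eqn:shapelemma}. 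What survives is the desired bulk term together with two line integrals over $\Gamma$, which combine into $\int_\Gamma[(\boldsymbol{\xi}\cdot\vec n)(\vec V_0\cdot\vec c_{_\Gamma})-(\boldsymbol{\xi}\cdot\vec c_{_\Gamma})(\vec V_0\cdot\vec n)]\,d\Gamma$; recognizing this as $\int_\Gamma\vec V_0\cdot\vec c_{_\Gamma}^\gamma\,d\Gamma$ with $\vec c_{_\Gamma}^\gamma=(\boldsymbol{\xi}\cdot\vec n)\vec c_{_\Gamma}-(\boldsymbol{\xi}\cdot\vec c_{_\Gamma})\vec n$ yields \eqref{eqn:Wvariation}.

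I expect the main obstacle to be the first step rather than the algebra: correctly identifying the shape derivative of the normal-dependent density as $\boldsymbol{\xi}\cdot\vec n'$. The care needed is twofold — applying the chain rule to a function composed with the shape-dependent normal, and noting that the Cahn-Hoffman vector $\boldsymbol{\xi}=\nabla\hat\gamma(\vec n)$, rather than the sphere-intrinsic gradient of $\gamma$, is the right object. This works because $\vec n'$ is tangent to the unit sphere (that is, $\vec n'\perp\vec n$, as $|\vec n|\equiv 1$ is preserved under the flow), so the radial component of $\boldsymbol{\xi}$ — the extra piece contributed by the homogeneous extension through the Euler relation \eqref{eqn:xiequation} — is annihilated in the pairing $\boldsymbol{\xi}\cdot\vec n'$, making the result insensitive to how $\gamma$ is extended off the sphere. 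Once this identification is secured, the remaining work is the routine integration-by-parts bookkeeping and the curvature-term cancellation driven by $\boldsymbol{\xi}\cdot\vec n=\gamma(\vec n)$.
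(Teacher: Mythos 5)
Your proposal is correct and follows essentially the same route as the paper: the chain rule together with $\vec n'(\Omega;\vec V)\big|_{\partial\Omega}=-\nabla_{_S}(\vec V_0\cdot\vec n)$ from \eqref{eqn:shapenormal}, one surface integration by parts, and the Euler relation $\boldsymbol{\xi}\cdot\vec n=\gamma(\vec n)$ to cancel the curvature term and assemble $\vec c_{_\Gamma}^\gamma$. The only cosmetic difference is that the paper invokes \eqref{eqn:shapefunction} with the explicit signed-distance extension $\psi(\Omega)=\hat{\gamma}(\nabla b(\vec x))$ and verifies $\partial\psi/\partial\vec n=0$, whereas you apply \eqref{eqn:shapelemma} directly to the surface function $\gamma(\vec n)$ --- the same fact in slightly different clothing.
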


\begin{proof}
We firstly assume $\hat{\gamma}(\vec p)$ is a homogeneous extension of $\gamma(\vec n)$,
\begin{equation}
\hat{\gamma}(\vec p) = |\vec p|\gamma\left(\frac{\vec p}{|\vec p|}\right),\quad\forall\, \vec p\in\mathbb{R}^3\backslash\{{\vec 0}\},
\end{equation}
where the definition domain of the function $\gamma(\vec n)$ changes from unit vectors $\vec n$ to arbitrary non-zero vectors $\vec p\in\mathbb{R}^3$.

We next consider a bounded domain $\Omega\subset \mathbb{R}^3$ such that $S\subset\partial\Omega$. Then, based on the signed distance function defined in \eqref{eqn:signdistance}, we can define $\nabla b(x)\in\mathbb{R}^3$ as an extension of the normal vector $\vec n$ in the neighbourhood of $S$.  Thus we can reformulate
\begin{equation}
J(S) = \int _S \hat{\gamma}\bigl(\nabla b(\vec x)\bigr)\Big|_S \;dS = \int_S \psi(\Omega)\Big|_S\;dS,
\end{equation}
with $\psi(\Omega): = \hat{\gamma}\bigl(\nabla b(\vec x)\bigr)$. Using the chain rule for shape derivatives and the definition of Cahn-Hoffman $\boldsymbol{\xi}$-vector in~\eqref{eqn:xivector}, we conclude that the following expression holds
\begin{equation}
\psi'(\Omega;\vec V)\Big|_S = \nabla{\hat{\gamma}}\bigl(\nabla b(\vec x)\bigr)\Big|_S \cdot \vec n'(\Omega;\vec V)\Big|_S = -\boldsymbol{\xi}\cdot\nabla_{_S}(\vec V_0\cdot\vec n).
\label{eqn:shapevariation1}
\end{equation}
Moreover, by noting the fact $|\nabla b(\vec x)|= 1$, we obtain
\begin{equation}
\frac{\partial \psi}{\partial \vec n} \Big|_S = \boldsymbol{\xi}\cdot\bigl(\left(\nabla\nabla b(\vec x)\right)\, \nabla b(\vec x)\bigr)\Big|_S= 0,
\label{eqn:shapevariation2}
\end{equation}
where $\nabla\nabla b(\vec x)\in \mathbb{R}^{3\times 3}$.
By making use of \eqref{eqn:shapefunction} and combining \eqref{eqn:shapevariation1}-\eqref{eqn:shapevariation2}, we immediately have
\begin{eqnarray}
\delta J(S;\vec V) &=& - \int_S \boldsymbol{\xi}\cdot\nabla_{_S}(\vec V_0\cdot\vec n)\;dS + \int_S\gamma(\vec n)\,(\vec V_0\cdot\vec n)\,\mathcal{H}\;dS +\int_\Gamma \gamma(\vec n)\,(\vec V_0\cdot\vec c_{_\Gamma})\;d\Gamma\nonumber\\
&:=& I + II + III.   \nonumber
\end{eqnarray}
For the first term, by using the integration by parts, we obtain
\begin{equation*}
I = \int_S(\nabla_{_S}\cdot\boldsymbol{\xi})\,(\vec V_0\cdot\vec n)\;dS-\int_S(\boldsymbol{\xi}\cdot\vec n)\,(\vec V_0\cdot \vec n)\,\mathcal{H}\;dS -\;\int_\Gamma(\boldsymbol{\xi}\cdot\vec c_{_\Gamma})\,(\vec V_0\cdot \vec n)\;d\Gamma.
\end{equation*}
Based on~\eqref{eqn:xiequation}, we have $\gamma(\vec n) = \boldsymbol{\xi}\cdot\vec n$. Thus we can rewrite
\begin{equation*}
II = \int_S(\boldsymbol{\xi}\cdot\vec n)\,(\vec V_0\cdot\vec n)\mathcal{H}\;dS, \qquad
III = \int_\Gamma(\boldsymbol{\xi}\cdot\vec n)\,(\vec V_0\cdot\vec c_{_\Gamma})\;d\Gamma.
\end{equation*}
Finally, by combining the above three terms together, we immediately have
\begin{eqnarray}
\delta J(S;\vec V)
&=&\int_S(\nabla_{_S}\cdot\boldsymbol{\xi})\,(\vec V_0\cdot\vec n)\;dS+\int_\Gamma\,\Bigl[(\boldsymbol{\xi}\cdot\vec n)\,\vec c_{_\Gamma}-(\boldsymbol{\xi}\cdot\vec c_{_\Gamma})\,\vec n\Bigr]\cdot\vec V_0\;d\Gamma\nonumber\\
&=&\int_S(\nabla_{_S}\cdot\boldsymbol{\xi})\;(\vec V_0\cdot\vec n)\;dS+\int_\Gamma \vec c_{_\Gamma}^\gamma\cdot\vec V_0\;d\Gamma, \nonumber
\end{eqnarray}
where $\vec c_{_\Gamma}^\gamma=(\boldsymbol{\xi}\cdot\vec n)\,\vec c_{_\Gamma}-(\boldsymbol{\xi}\cdot\vec c_{_\Gamma})\,\vec n$.
\end{proof}

By using the above Lemma, we can easily obtain the first variation of the energy functional for solid-state dewetting problems
defined in \eqref{eqn:totalenergy}.
\begin{thm}
\label{thm:Wvariation}
The first variation of the free energy (or shape) functional \eqref{eqn:totalenergy} used in solid-state dewetting problems
with respect to a smooth vector field $\vec V$ can be written as:
\begin{equation}
\delta W(S;\vec V) = \int_S\,(\nabla_{_S}\cdot\boldsymbol{\xi})\;(\vec V_0\cdot\vec n)\,dS + \int_\Gamma(\vec c_{_\Gamma}^\gamma\cdot\vec n_{_\Gamma} + \gamma_{_{\subFS}} - \gamma_{_{\subVS}})(\vec V_0\cdot\vec n_{_\Gamma})\,d\Gamma,
\label{eqn:energyvariation}
\end{equation}
where $\vec n_{_\Gamma}$ is the unit outer normal of the contact line curve $\Gamma$ on the substrate (shown in Fig.~\ref{fig:dewetting3d}).
\end{thm}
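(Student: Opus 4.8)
The plan is to apply Lemma~\ref{lem:variationsurfaceenergies} to the interface energy term $W_{\rm{int}} = \int_S \gamma(\vec n)\,dS$ and to compute separately the first variation of the substrate energy term $W_{\rm{sub}} = (\gamma_{_{\subFS}} - \gamma_{_{\subVS}})\,A(\Gamma)$, then add the two contributions. Lemma~\ref{lem:variationsurfaceenergies} immediately delivers
\begin{equation*}
\delta W_{\rm{int}}(S;\vec V) = \int_S (\nabla_{_S}\cdot\boldsymbol{\xi})\,(\vec V_0\cdot\vec n)\,dS + \int_\Gamma \vec V_0\cdot\vec c_{_\Gamma}^\gamma\,d\Gamma,
\end{equation*}
so the interior term is already in the final form and no further work is needed there. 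The whole problem therefore reduces to handling the two boundary integrals along $\Gamma$ and showing that they combine into the single contact-line term in~\eqref{eqn:energyvariation}.

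First I would compute $\delta W_{\rm{sub}}(S;\vec V)$. Since $\gamma_{_{\subFS}}-\gamma_{_{\subVS}}$ is constant, this is $(\gamma_{_{\subFS}}-\gamma_{_{\subVS}})\,\delta A(\Gamma;\vec V)$, where $A(\Gamma)$ is the planar area enclosed by the contact line on the flat substrate $S_{\rm{sub}}$. Because $A(\Gamma)$ is a two-dimensional shape functional ($J(\Omega')=\int_{\Omega'}1\,d\Omega'$ with $\Omega'$ the planar region bounded by $\Gamma$), its first variation is governed by Proposition~\ref{lm:domainvariation} specialized to the plane: $\delta A(\Gamma;\vec V) = \int_\Gamma \vec V_0\cdot\vec n_{_\Gamma}\,d\Gamma$, where $\vec n_{_\Gamma}$ is the in-plane outer normal of $\Gamma$. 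Thus
\begin{equation*}
\delta W_{\rm{sub}}(S;\vec V) = (\gamma_{_{\subFS}}-\gamma_{_{\subVS}})\int_\Gamma (\vec V_0\cdot\vec n_{_\Gamma})\,d\Gamma.
\end{equation*}

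The crux is then to rewrite the contact-line integral $\int_\Gamma \vec V_0\cdot\vec c_{_\Gamma}^\gamma\,d\Gamma$ so that it too is expressed through the normal component $\vec V_0\cdot\vec n_{_\Gamma}$, allowing the two line integrals to merge. The key observation is that the contact line $\Gamma$ is constrained to move \emph{along} the substrate $S_{\rm{sub}}$, so only the motion tangent to the substrate plane but normal to $\Gamma$ is admissible; effectively one projects $\vec V_0$ onto the frame $\{\vec n_{_\Gamma}, \boldsymbol{\tau}_{_\Gamma}\}$ spanning the substrate plane, with the $\boldsymbol{\tau}_{_\Gamma}$-component producing no change in energy. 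I would expand $\vec V_0 = (\vec V_0\cdot\vec n_{_\Gamma})\,\vec n_{_\Gamma} + (\vec V_0\cdot\boldsymbol{\tau}_{_\Gamma})\,\boldsymbol{\tau}_{_\Gamma}$ along $\Gamma$ and substitute into $\vec V_0\cdot\vec c_{_\Gamma}^\gamma$, using the orthogonality relations $\boldsymbol{\tau}_{_\Gamma}\perp\vec c_{_\Gamma}$, $\boldsymbol{\tau}_{_\Gamma}\perp\vec n_{_\Gamma}$, and $\boldsymbol{\tau}_{_\Gamma}\perp\vec n$ (the latter since $\boldsymbol{\tau}_{_\Gamma}\in\mathcal{T}_{\vec x}S$ is tangent to $S$ while also lying in the substrate). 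These relations force the $\boldsymbol{\tau}_{_\Gamma}$-component of $\vec c_{_\Gamma}^\gamma$ to drop out, leaving $\vec V_0\cdot\vec c_{_\Gamma}^\gamma = (\vec c_{_\Gamma}^\gamma\cdot\vec n_{_\Gamma})(\vec V_0\cdot\vec n_{_\Gamma})$. Adding this to $\delta W_{\rm{sub}}$ and factoring out $(\vec V_0\cdot\vec n_{_\Gamma})$ yields precisely~\eqref{eqn:energyvariation}.

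The main obstacle I anticipate is justifying rigorously the reduction of the full speed $\vec V_0$ on $\Gamma$ to its $\vec n_{_\Gamma}$-component, i.e., verifying that the admissible perturbations keep $\Gamma$ on the substrate and that the tangential sliding along $\boldsymbol{\tau}_{_\Gamma}$ contributes nothing. This requires care because $\vec c_{_\Gamma}^\gamma$ involves $\boldsymbol{\xi}$ evaluated at the interface normal $\vec n$, and one must be sure that the projection step uses the correct geometric relations among $\vec n$, $\vec c_{_\Gamma}$, $\vec n_{_\Gamma}$ and $\boldsymbol{\tau}_{_\Gamma}$ listed just before Definition~\ref{defi:calculus}. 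Once the orthogonality bookkeeping is set up correctly, the remainder is a short algebraic combination of the two boundary integrals.
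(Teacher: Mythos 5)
Your proposal is correct and follows essentially the same route as the paper: apply Lemma~\ref{lem:variationsurfaceenergies} to $W_{\rm{int}}$, use the constraint that $\vec V_0$ is parallel to the substrate on $\Gamma$ together with $\vec c_{_\Gamma}^\gamma\perp\boldsymbol{\tau}_{_\Gamma}$ to reduce $\vec V_0\cdot\vec c_{_\Gamma}^\gamma$ to $(\vec c_{_\Gamma}^\gamma\cdot\vec n_{_\Gamma})(\vec V_0\cdot\vec n_{_\Gamma})$, and add the variation of the substrate area. The only cosmetic difference is that the paper obtains $\delta A(\Gamma;\vec V)=\int_\Gamma\vec V_0\cdot\vec n_{_\Gamma}\,d\Gamma$ from Proposition~\ref{lm:surfaceintegralvariation} applied to the flat surface $S_{_{\subFS}}$ (constant integrand, $\mathcal{H}=0$, co-normal $\vec n_{_\Gamma}$) rather than from a planar specialization of Proposition~\ref{lm:domainvariation}, which yields the same result.
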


\begin{proof}
From \eqref{eqn:totalenergy}, we observe that the total free energy consists of two parts: the film/vapor interface energy
$W_{\rm{int}}$ and the substrate energy $W_{\rm{sub}}$. First, by using Lemma~\ref{lem:variationsurfaceenergies}, we
can directly obtain the first variation of the film/vapor interface energy
$W_{\rm{int}}$ as follows,
\begin{equation}
\delta W_{\rm{int}}(S;\vec V) = \int_S(\nabla_{_S}\cdot\boldsymbol{\xi})\,(\vec V_0\cdot\vec n)\;dS+\int_\Gamma \vec V_0 \cdot \vec c_{_\Gamma}^\gamma\;d\Gamma.
\label{eqn:tv1}
\end{equation}
Here, $\vec c_{_\Gamma}^\gamma$ is a linear combination of $\vec c_{_\Gamma}$ and $\vec n$, which is defined on the contact line $\Gamma$. Therefore, as shown in Fig.~\ref{fig:dewetting3d}, we have
\begin{equation}
\vec c_{_\Gamma}\perp\boldsymbol{\tau}_{_\Gamma},\quad\vec n\perp\boldsymbol{\tau}_{_\Gamma}\quad\Rightarrow\quad \vec c_{_\Gamma}^\gamma\perp\boldsymbol{\tau}_{_\Gamma}.
\label{eqn:ccrelation}
\end{equation}
For solid-state dewetting problems studied in this paper, we assume that the contact line $\Gamma$ must move along the substrate plane $S_{\rm{sub}}$, i.e.,
\begin{equation*}
T_\epsilon\Gamma\subset S_{\rm{sub}},\qquad \vec V_0(\vec x) \sslash S_{\rm{sub}},\qquad\forall\vec x\in\Gamma.
\end{equation*}
Therefore, for any $\vec x \in \Gamma$, $\vec V_0(\vec x)$ can be decomposed into two vectors along the directions $\vec n_{_\Gamma}(\vec x)$ and $\boldsymbol{\tau}_{_\Gamma}(\vec x)$, i.e., $\vec V_0=k_1\vec n_{_\Gamma}+k_2\boldsymbol{\tau}_{_\Gamma}$,
where $k_1$ and $k_2$ represent the corresponding components. By making use of \eqref{eqn:ccrelation}, we can obtain
\begin{eqnarray*}
\vec V_0 \cdot  \vec c_{_\Gamma}^\gamma &=& (k_1\vec n_{_\Gamma}+k_2\boldsymbol{\tau}_{_\Gamma})\cdot  \vec c_{_\Gamma}^\gamma = k_1(\vec n_{_\Gamma} \cdot  \vec c_{_\Gamma}^\gamma)  \\
&=&(\vec V_0 \cdot \vec n_{_\Gamma})\,(c_{_\Gamma}^\gamma\cdot\vec n_{_\Gamma}), \qquad\forall\vec x\in\Gamma.
\end{eqnarray*}
Thus we can reformulate~\eqref{eqn:tv1} as
\begin{equation}
\delta W_{\rm{int}}(S;\vec V) = \int_S(\nabla_{_S}\cdot\boldsymbol{\xi})\;(\vec V_0\cdot\vec n)\,dS + \int_\Gamma (\vec c_{_\Gamma}^\gamma\cdot\vec n_{_\Gamma})(\vec V_0 \cdot \vec n_{_\Gamma})\,d\Gamma.
\label{eqn:WIvariation}
\end{equation}

On the other hand, we can rewrite the substrate energy $W_{\rm{sub}}$ as
\begin{equation*}
W_{\rm{sub}} = (\gamma_{_{\subFS}} - \gamma_{_{\subVS}})A(\Gamma) = (\gamma_{_{\subFS}} - \gamma_{_{\subVS}})\int_{S_{_{\subFS}}}\;d S_{_{\subFS}}.
\end{equation*}
By using Proposition~\ref{lm:surfaceintegralvariation}, and noting that the integrand $\varphi$ in~\eqref{eqn:shapelemma} is a constant and $S_{_{\subFS}}$ is a flat surface with a plane boundary curve $\Gamma$ (i.e., $\mathcal{H}=0$ and $\vec n_{_\Gamma}$ is the unit co-normal vector of the flat surface $S_{_{\subFS}}$), we directly have
\begin{equation}
\delta W_{\rm{sub}}(S;\vec V) = (\gamma_{\subFS} - \gamma_{\subVS})\int_\Gamma \vec V_0\cdot\vec n_{_\Gamma}\;d\Gamma.
\label{eqn:WSvariation}
\end{equation}
By combining~\eqref{eqn:WIvariation} and \eqref{eqn:WSvariation}, we obtain the following conclusion
\begin{equation*}
\delta W(S;\vec V) = \int_S\,(\nabla_{_S}\cdot\boldsymbol{\xi})\;(\vec V_0\cdot\vec n)\,dS + \int_\Gamma(\vec c_{_\Gamma}^\gamma\cdot\vec n_{_\Gamma} + \gamma_{\subFS}-\gamma_{\subVS})(\vec V_0\cdot\vec n_{_\Gamma})\,d\Gamma,
\end{equation*}
which completes the proof.
\end{proof}

\begin{remark}
The variational result given by~\eqref{eqn:energyvariation} tells us that the rate of change of the total interfacial
free energy is contributed from the two parts: one part results from the change of the interface $S$, and it is proportional
to the weighted mean curvature (i.e., $\nabla_{_S}\cdot\boldsymbol{\xi}$)~\cite{Taylor92} and the rate of change of the volume (i.e., $\vec V_0\cdot\vec n\,dS$, the normal velocity times the surface area element); the other part comes from the change of the contact line $\Gamma$.
\end{remark}

\begin{remark}
In 2D case, the variational result given by~\eqref{eqn:energyvariation} in Theorem~\ref{thm:Wvariation} will reduce to the
variational result presented in the reference~\cite{Jiang18}.
\end{remark}

\begin{remark}
When the substrate is curved in 3D, the variational result given by~\eqref{eqn:energyvariation} in Theorem~\ref{thm:Wvariation} is still valid. We can perform similar discussions as the reference~\cite{Jiang18b} for curved substrates in 2D.
\end{remark}

\section{Equilibrium shapes}

The equilibrium shape of the solid-state dewetting problem can be stated as follows~\cite{Bao17b,Jiang16}:
\begin{equation}
\min_{\Omega} W : =W(S) = \int_S\gamma(\vec n)\;dS + (\gamma_{_{\subFS}}-\gamma_{_{\subVS}})A(\Gamma) \quad\rm{s.t.}\quad|\Omega|=C,
\label{eqn:3dewetting}
\end{equation}
where $C>0$ is a prescribed constant representing the total volume of the dewetted particle, and $\Omega$ represents the domain (or the particle) enclosed by the interface $S$ and the substrate plane $S_{\rm{sub}}$.

The Lagrangian for the above optimization problem can be defined as
\begin{equation}
L(S,\lambda) = \int_S\gamma(\vec n)\;dS + (\gamma_{_{\subFS}}-\gamma_{_{\subVS}})A(\Gamma) - \lambda (|\Omega|-C),
\end{equation}
with $\lambda$ representing the Lagrange multiplier. The first variation of the total volume term can be obtained by simply choosing the integrand $\psi(\vec x) \equiv 1, \forall\,\vec x \in \Omega$ in~\eqref{eqn:domainvariation} by Proposition~\ref{lm:domainvariation}. Therefore, by combining with~\eqref{eqn:energyvariation}, the first variation of the Lagrangian with respect to a smooth vector field $\vec V$ can be given as
\begin{equation}
\delta L(S,\lambda;\vec V) = \int_S(\nabla_{_S}\cdot\boldsymbol{\xi} - \lambda)(\vec V_0\cdot\vec n)\;dS + \int_\Gamma(\vec c_{_\Gamma}^\gamma\cdot\vec n_{_\Gamma} + \gamma_{_{\subFS}} - \gamma_{_{\subVS}})(\vec V_0\cdot\vec n_{_\Gamma})\,d\Gamma.
\label{eqn:LagrangianVariation}
\end{equation}
Based on the above first variation, we have the following theorem which yields the necessary conditions for the equilibrium shape of solid-state dewetting problem.
\begin{theorem}
\label{lem:equcondition}
Assume that a two-dimensional manifold $S_e$ with smooth boundary $\Gamma_e$ is the equilibrium shape of the solid-state dewetting problem~\eqref{eqn:3dewetting}, then the following conditions must be satisfied
\begin{subequations}
\begin{align}
\label{eqn:equcondition1}
&\nabla_{_{S_e}}\cdot\boldsymbol{\xi}=\lambda,\;\text{on}\;S_e.\\
&\vec c_{_\Gamma}^\gamma\cdot\vec n_{_\Gamma} + \gamma_{_{\subFS}} - \gamma_{_{\subVS}}=0,\;\text{on}\;\Gamma_e.
\label{eqn:equcondition2}
\end{align}
\end{subequations}
where the constant $\lambda$ is determined by the prescribed total volume, i.e., the constant $C$.
\begin{proof}
If $S_e$ is the equilibrium shape, then~\eqref{eqn:LagrangianVariation} must vanish at $S=S_e$ for any smooth vector field $\vec V$. Therefore, we immediately obtain the above two necessary conditions.
\end{proof}
\end{theorem}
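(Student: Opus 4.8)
The plan is to exploit the fact that an equilibrium shape is, by the Lagrange multiplier principle, a stationary point of the Lagrangian $L(S,\lambda)$, and then to extract the pointwise conditions \eqref{eqn:equcondition1}--\eqref{eqn:equcondition2} from the already-computed first variation \eqref{eqn:LagrangianVariation} via a localization (fundamental-lemma-of-the-calculus-of-variations) argument. First I would record that if $S_e$ minimizes $W$ under the volume constraint $|\Omega|=C$, then there is a multiplier $\lambda$ for which $\delta L(S_e,\lambda;\vec V)=0$ for every admissible speed vector field $\vec V$. The point of introducing $L$ is precisely that the constraint has been absorbed, so the perturbations may now be taken freely among all $\vec V\in C(C^k(\bar D,\bar D);[0,\epsilon_0))$ that respect the geometric requirement $\vec V_0\sslash S_{\rm{sub}}$ on $\Gamma_e$ (the contact line stays on the substrate).

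The core of the argument is that the surface integral over $S_e$ and the line integral over $\Gamma_e$ in \eqref{eqn:LagrangianVariation} are governed by two independent pieces of data: the interior normal velocity $\vec V_0\cdot\vec n$ on $S_e$, and the in-plane normal velocity $\vec V_0\cdot\vec n_{_\Gamma}$ along $\Gamma_e$. To obtain \eqref{eqn:equcondition1} I would restrict to fields $\vec V$ compactly supported in the interior $S_e\setminus\Gamma_e$, so that the $\Gamma_e$-term drops out, and whose normal trace $\vec V_0\cdot\vec n$ realizes an arbitrary smooth compactly-supported function on $S_e$. Vanishing of $\int_{S_e}(\nabla_{_{S_e}}\cdot\boldsymbol{\xi}-\lambda)(\vec V_0\cdot\vec n)\,dS$ for all such test functions then forces $\nabla_{_{S_e}}\cdot\boldsymbol{\xi}-\lambda=0$ pointwise on $S_e$. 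With \eqref{eqn:equcondition1} established, the surface term in \eqref{eqn:LagrangianVariation} vanishes identically, leaving $\int_{\Gamma_e}(\vec c_{_\Gamma}^\gamma\cdot\vec n_{_\Gamma}+\gamma_{_{\subFS}}-\gamma_{_{\subVS}})(\vec V_0\cdot\vec n_{_\Gamma})\,d\Gamma=0$; choosing $\vec V$ whose trace $\vec V_0\cdot\vec n_{_\Gamma}$ sweeps out an arbitrary smooth function on the one-dimensional curve $\Gamma_e$ and applying the fundamental lemma once more yields \eqref{eqn:equcondition2}. Finally, the value of $\lambda$ is pinned down by reinserting the constraint $|\Omega|=C$.

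The step I expect to be the genuine obstacle --- as opposed to the routine localization --- is verifying that the admissible class of speed fields is rich enough to realize the two normal traces independently and with arbitrary compactly-supported profiles. Along $\Gamma_e$ the requirement $\vec V_0\sslash S_{\rm{sub}}$ permits one to prescribe only the in-substrate-plane normal component $\vec V_0\cdot\vec n_{_\Gamma}$, so one must confirm that this is exactly the quantity appearing in the boundary integral. This is indeed the case after the reduction $\vec V_0\cdot\vec c_{_\Gamma}^\gamma=(\vec V_0\cdot\vec n_{_\Gamma})(\vec c_{_\Gamma}^\gamma\cdot\vec n_{_\Gamma})$ leading to \eqref{eqn:WIvariation}, so no information is lost and the substrate constraint does not obstruct the fundamental lemma on $\Gamma_e$. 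Constructing, for a prescribed pair of traces, an explicit $\vec V$ of the required regularity --- for instance by using the signed distance function to build a field with controlled normal component near $S_e$, cut off away from $\Gamma_e$ --- is the technical point underpinning the otherwise short stationarity argument.
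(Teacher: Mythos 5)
Your proposal is correct and follows essentially the same route as the paper: the paper's proof simply asserts that $\delta L(S_e,\lambda;\vec V)=0$ for all admissible $\vec V$ and that the two conditions follow "immediately," which is precisely the Lagrange-multiplier stationarity plus the localization (fundamental lemma) argument you spell out. Your additional care about the richness of the admissible class of speed fields and the reduction of the boundary term to $(\vec V_0\cdot\vec n_{_\Gamma})(\vec c_{_\Gamma}^\gamma\cdot\vec n_{_\Gamma})$ fills in details the paper leaves implicit, but does not change the approach.
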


\begin{figure}[!htp]
\centering
\includegraphics[width=0.45\textwidth]{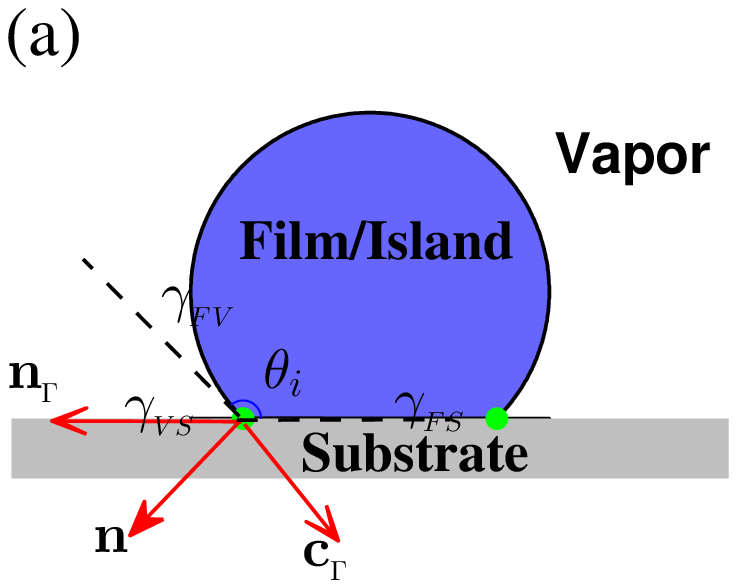}
\includegraphics[width=0.45\textwidth]{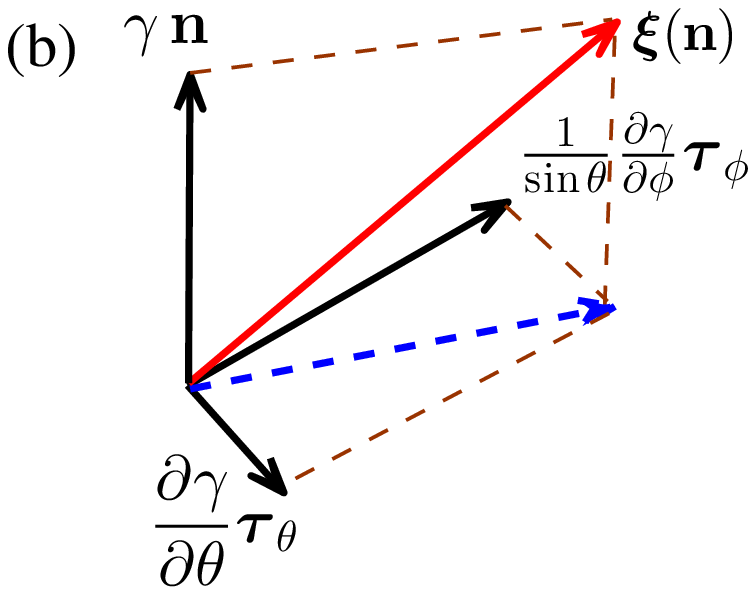}
\caption{(a) The cross-section profile of the equilibrium shape associated with several vectors at the contact line $\Gamma$, where $\theta_i$ is the equilibrium contact angle; (b) the three components of the Cahn-Hoffman $\boldsymbol{\xi}$-vector in the spherical coordinate system.}
\label{fig:Equvector}
\end{figure}

For isotropic surface energy, i.e., $\gamma(\vec n)\equiv 1$ (scaled by a constant $\gamma_0$), we have $\boldsymbol{\xi}=\vec n$ and $\vec c_{_\Gamma}^\gamma=\vec c_{_\Gamma}$. By simple calculations, \eqref{eqn:equcondition1} will reduce to the condition of constant mean curvature. Denote $\Gamma_e$ as the boundary of $S_e$. For arbitrary $\vec x\in\Gamma_e$, let $\theta_i(\vec x)$ represent the equilibrium contact angle at boundary point $\vec x$.  Then, \eqref{eqn:equcondition2} will reduce to
\begin{equation}
\cos\theta_i(\vec x) = \sigma,\quad\forall\vec x\in\Gamma_e,
\label{eqn:Youngequation}
\end{equation}
where the (dimensionless) material constant $\sigma: = \frac{\gamma_{_{\subVS}}-\gamma_{_{\subFS}}}{\gamma_{0}}=\cos\theta_i$,
and it is the well-known isotropic Young equation \cite{Young1805}.

Condition~\eqref{eqn:equcondition2} can be regarded as the Young equation for anisotropic surface energy $\gamma(\vec n)$ in 3D. For the anisotropic case, we can write the surface energy density in terms of the spherical coordinate, i.e., $\gamma_{_{\subFV}}=\gamma(\theta,\phi)$ (scaled by a constant $\gamma_0$). Therefore, the Cahn-Hoffman $\boldsymbol{\xi}$-vector can be decomposed into the following three components (as shown in Fig.~\ref{fig:Equvector}(b)):
\begin{equation}
\boldsymbol{\xi}(\vec n) = \nabla\hat{\gamma}(\vec n) = \gamma(\theta,\phi)\vec n  + \frac{\partial\gamma(\theta,\phi)}{\partial\theta}\boldsymbol{\tau}_{_\theta} + \frac{1}{\sin\theta}\frac{\partial\gamma(\theta,\phi)}{\partial\phi}\boldsymbol{\tau}_{_\phi},
\label{eqn:cahnhoff2}
\end{equation}
where in these expressions,
\begin{subequations}
\begin{align}
&\vec n = (\sin\theta\cos\phi,\sin\theta\sin\phi,\cos\theta)^T, \nonumber\\
&\boldsymbol{\tau}_{_\theta} = (\cos\theta\cos\phi,\cos\theta\sin\phi,-\sin\theta)^T,\nonumber\\
&\boldsymbol{\tau}_{_\phi} = (-\sin\phi,\cos\phi,0)^T. \nonumber
\end{align}
\end{subequations}
We obtain that the following expressions hold:
\begin{subequations}
\begin{align}
&\boldsymbol{\xi}\cdot\vec n = \gamma(\theta,\phi),\qquad \vec c_{_\Gamma}\cdot\vec n_{_\Gamma} = \cos\theta(\vec x), \nonumber\\
&\boldsymbol{\xi}\cdot\vec c_{_\Gamma} = \frac{\partial\gamma(\theta,\phi)}{\partial\theta},\qquad\vec n\cdot\vec n_{_\Gamma} = \sin\theta(\vec x). \nonumber
\end{align}
\end{subequations}
Therefore, we can rewrite~\eqref{eqn:equcondition2} as
\begin{equation}
\gamma(\theta,\phi)\cos \theta(\vec x) - \frac{\partial\gamma(\theta,\phi)}{\partial\theta}\sin \theta(\vec x)-\sigma=0,
\quad\forall\vec x\in\Gamma_e,
\end{equation}
which is consistent with the anisotropic Young equation discussed for the solid-state dewetting problem in 2D \cite{Bao17b,Wang15}.

If $\vec X:=\vec X(\theta,\phi)$ represents the position vector of a surface, we have $\nabla_{_S}\cdot\vec X=2$ by using Definition~\ref{defi:calculus}. Therefore, if we use the $\boldsymbol{\xi}$-plot to represent the position vector of the equilibrium shape, then the necessary condition~\eqref{eqn:equcondition1} will be automatically satisfied. From one side, this is the reason why the $\boldsymbol{\xi}$-plot can yield equilibrium shapes for a free-standing solid particles (as shown in Fig~\ref{fig:anisotropy}). Furthermore, based on the recent work for the generalized Winterbottom construction~\cite{Bao17b,Winterbottom67}, we can construct its analytical expression for the equilibrium shape which also can satisfy the contact angle condition~\eqref{eqn:equcondition2}. First, we define a domain of definition $U_{_\phi}$ for $\theta$ under a fixed value $\phi$ as
\begin{equation}
U_{_\phi}:=\Bigl\{\theta\Big|\gamma(\theta,\phi)\cos \theta - \frac{\partial\gamma(\theta,\phi)}{\partial\theta}\sin\theta-\sigma\geq0, \quad\theta\in [0, \pi]\Bigr\},
\end{equation}
where $\sigma = \frac{\gamma_{_{_{\subVS}}}-\gamma_{_{_{\subFS}}}}{\gamma_{_0}}$. Based on Theorem~\ref{lem:equcondition}, we can explicitly construct its equilibrium shape in the parametric formula as $S_e(\theta,\phi):=\vec X(\theta,\phi)=(x(\theta,\phi),y(\theta,\phi),z(\theta,\phi))^T$,
\begin{equation}
\begin{cases}
x(\theta,\phi)= \lambda\bigl[\gamma(\theta,\phi)\sin\theta\cos\phi + \frac{\partial\gamma(\theta,\phi)}{\partial\theta}\cos\theta\cos\phi - \frac{1}{\sin\theta}\frac{\partial\gamma(\theta,\phi)}{\partial\phi}\sin\phi\bigr],\\[0.5em]
y(\theta,\phi)=\lambda\bigl[\gamma(\theta,\phi)\sin\theta\sin\phi + \frac{\partial\gamma(\theta,\phi)}{\partial\theta}\cos\theta\sin\phi + \frac{1}{\sin\theta}\frac{\partial\gamma(\theta,\phi)}{\partial\phi}\cos\phi\bigr],\\[0.5em]
z(\theta,\phi)=\lambda\bigl[\gamma(\theta,\phi)\cos\theta - \frac{\partial\gamma(\theta,\phi)}{\partial\theta}\sin\theta -\sigma\bigr],
\end{cases}
\label{eqn:equilibirumshape}
\end{equation}
where $\phi\in[0,2\pi]$, $\theta\in U_{_\phi}$, and $\lambda$ is the scaling constant determined by the total volume $|\Omega|$.

\begin{figure}[!htp]
\centering
\includegraphics[width=1.0\textwidth]{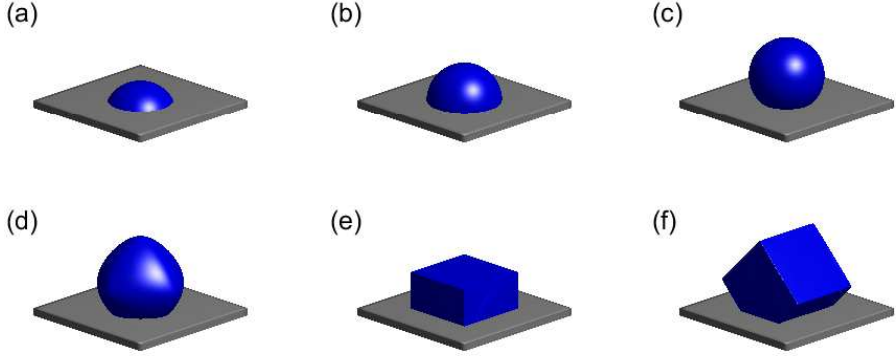}
\caption{The equilibrium shape defined by Eq.~\eqref{eqn:equilibirumshape}, where (a)-(c) is for isotropic surface energy, i.e., $\gamma(\vec n) \equiv 1$, but with different material constants $\sigma=\cos(\pi/3), \cos(\pi/2), \cos(3\pi/4)$, respectively; (d) $\gamma(\vec n) = 1+ 0.2(n_1^4+n_2^4+n_3^4)$,~$\sigma=\cos(3\pi/4)$; (e) The surface energy density is given by~\eqref{eqn:smoothcusp}, where~$\sigma=\cos(3\pi/4),~\varepsilon=0.01$ ; (f) The surface energy density is given by $\gamma(\boldsymbol{M}_x(\pi/4)\vec n)$ where $\gamma(\vec n)$ is defined by~\eqref{eqn:smoothcusp} and $\boldsymbol{M}_x(\pi/4)$ represents an orthogonal matrix for the rotation by an angle $\pi/4$ about the $x$-axis in 3D, using the right-hand rule, where~$\sigma=\cos(3\pi/4)$,~$\varepsilon=0.01$.}
\label{fig:Equilibirum}
\end{figure}

Based on the formula~\eqref{eqn:equilibirumshape}, the equilibrium shape under different types of surface energy anisotropies, e.g., the cubic anisotropy and regularized ``cusped'' anisotropy defined in~\eqref{eqn:smoothcusp}, can be easily constructed. Fig.~\ref{fig:Equilibirum}(a)-(c) depicts the equilibrium shapes for isotropic surface energy with the material constant $\sigma$ chosen as $\sigma=\cos(\pi/3),\cos(\pi/2),\cos(3\pi/4)$, respectively. It clearly demonstrates the effect of the material constant $\sigma$ on the equilibrium shape by influencing the equilibrium contact angle via \eqref{eqn:Youngequation}. Moreover, we also present equilibrium shapes for the cubic anisotropic surface energy, i.e., $\gamma(\vec n) = 1 + a(n_1^4+n_2^4+n_3^4)$ and regularized ``cusped'' surface energy defined in~\eqref{eqn:smoothcusp} with $\sigma=\cos(3\pi/4)$ in Fig.~\ref{fig:Equilibirum}(d)-(e). The anisotropy for Fig.~\ref{fig:Equilibirum}(f) is chosen by an anti-clockwise rotation along the $x$-axis by $45$ degrees under the right-hand rule for the regularized ``cusped'' surface energy. We can observe that this rotation results in a corresponding rotation of the equilibrium shape.

\section{A sharp-interface model and its properties}

In this section, we propose a kinetic sharp-interface model for simulating solid-state dewetting of thin films with anisotropic surface energies, and then we show that the proposed model satisfies the mass conservation and energy dissipation.
\subsection{The model}
Based on~\eqref{eqn:energyvariation} in Theorem~\ref{thm:Wvariation}, we can define the first variation of the total interfacial energy functional with respect to the film/vapor interface $S$ and its boundary curve (i.e., the contact line $\Gamma$) as
\begin{equation}
\frac{\delta W}{\delta S}=\nabla_{_S}\cdot\boldsymbol{\xi},\qquad \frac{\delta W}{\delta \Gamma}=\vec c_{_\Gamma}^\gamma\cdot\vec n_{_\Gamma}+\gamma_{_{\subFS}} - \gamma_{_{\subVS}}.
\end{equation}
From the Gibbs-Thomson relation~\cite{Mullins57,Sutton95}, the chemical potential can be defined as
\begin{equation}
\mu=\Omega_0\frac{\delta W}{\delta S}=\Omega_0\nabla_{_S}\cdot\boldsymbol{\xi},
\label{eqn:svn1}
\end{equation}
with $\Omega_0$ representing the atomic volume. The normal velocity of the moving interface is controlled by surface diffusion \cite{Cahn74,Mullins57,Wang15,Jiang16}, and it can be defined as follows by Fick's laws of diffusion~\cite{Balluffi05}
\begin{equation}
\vec J = -\frac{D_s\nu}{k_B\,T_e}\nabla_{_S}\, \mu,\qquad v_n=-\Omega_0 (\nabla_{_S} \cdot \vec J)=\frac{D_s\nu\Omega_0}{k_B\,T_e}\nabla_{_S}^2\mu.
\label{eqn:svn2}
\end{equation}
In these expressions, $\vec J$ is the mass flux of atoms, $D_s$ is the surface diffusivity, $k_B\,T_e$ is the thermal energy, $\nu$ is the number of diffusing atoms per unit area, $\nabla_{_S}$ is the surface gradient. In addition to the surface diffusion which controlled the motion of the moving interface, we still need the boundary condition for the moving contact line. Following the idea for simulating solid-state dewetting in 2D \cite{Wang15,Jiang16,Jiang18b}, we assume that the normal velocity of the contact line $\Gamma$ is simply given by the energy gradient flow, which is determined by the time-dependent Ginzburg-Landau kinetic equations, i.e.,
\begin{equation}
v_{n_{_\Gamma}} = - \eta\frac{\delta W}{\delta \Gamma} = -\eta (\vec c_{_\Gamma}^\gamma\cdot\vec n_{_\Gamma}+\gamma_{_{\subFS}} - \gamma_{_{\subVS}}),
\label{eqn:gammavn}
\end{equation}
with $0<\eta<\infty$ denoting the contact line mobility, which can be thought of as a reciprocal of a constant
friction coefficient. For the physical explanation behind this approach, please refer to the recent paper~\cite{Wang15}.

We choose the characteristic length scale and characteristic surface energy scale as $h_0$ and $\gamma_{_0}$, respectively, the time scale as $\frac{h_0^4}{B\gamma_0}$ with $B=\frac{D_s\nu\Omega_0^2}{k_B\,T_e}$, and the contact line mobility is scaled by $\frac{B}{h_0^3}$. Let $\vec X(\cdot,t)=(x(\cdot,t),y(\cdot,t),z(\cdot,t))^T$ be a local parameterization of the moving film/vapor interface $S$, then we can obtain a dimensionless kinetic sharp-interface model for solid-state dewetting of thin film via the following Cahn-Hoffman $\boldsymbol{\xi}$-vector formulation as
\begin{eqnarray}
\label{eqn:weak31}
&&\partial_t\vec X=\Delta_{_S}\mu\;\vec n,\qquad t>0,\\
&&\mu=\nabla_{_S}\cdot\boldsymbol\xi,\qquad \boldsymbol\xi(\vec n)=\nabla\hat{\gamma}(\vec p)\Big|_{\vec p=\vec n},
\label{eqn:weak32}
\end{eqnarray}
where $t$ is the time, $\vec n$ is the unit outer normal vector of $S$, and $\boldsymbol{\xi}:=\boldsymbol{\xi}(\vec n)$ is the Cahn-Hoffman vector (scaled by $\gamma_0$). Here, for simplicity, we still use the same notations for all the dimensionless variables.

Let $\vec X_{_\Gamma}(\cdot,t)=(x_{_\Gamma}(\cdot,t),y_{_\Gamma}(\cdot,t),
z_{_\Gamma}(\cdot,t))^T$ represents a parametrization of the moving contact line $\Gamma(t)$. The initial condition is given as $S_0$ with boundary $\Gamma_0$ such that
\begin{equation}
S_0:=\vec X(\cdot,0)=(x_0,y_0,z_0),\quad  \Gamma_0:=\vec X(\cdot,0)\Big|_{\Gamma}.
\label{eqn:initialcondition}
\end{equation}
The above governing equations are subject to the following boundary conditions:

(i) contact line condition
\begin{equation}
z_{_\Gamma}(\cdot,t)=0,\quad t\geq 0;
\label{eqn:boundcon1}
\end{equation}

(ii) relaxed contact angle condition
\begin{equation}
\partial_t \vec X_{_\Gamma}=-\eta\Bigl(\vec c_{_\Gamma}^\gamma\cdot\vec n_{_\Gamma} - \sigma\Bigr)\vec n_{_\Gamma},\qquad t\geq0;
\label{eqn:boundcon2}
\end{equation}

(iii) zero-mass flux condition
\begin{equation}
\Bigl(\vec c_{_\Gamma}\cdot\nabla_{_S}\,\mu\Bigr)\Big|_\Gamma=0,\qquad t\geq0;
\label{eqn:boundcon3}
\end{equation}
where $\eta$ represents a (dimensionless) contact line mobility, $\vec c_{_\Gamma}^\gamma$ is the anisotropic co-normal vector which is defined as $\vec c_{_\Gamma}^\gamma:=(\boldsymbol{\xi}\cdot\vec n)\,\vec c_{_\Gamma}-(\boldsymbol{\xi}\cdot\vec c_{_\Gamma})\,\vec n$, $\vec c_{_\Gamma}$ represents the co-normal vector, and $\vec n_{_\Gamma}=(n_{_{\Gamma,1}},n_{_{\Gamma,2}},0)^T$ is the outer unit normal vector of $\Gamma$ on the flat substrate (cf.~Fig.~\ref{fig:dewetting3d}), and $\sigma = \frac{\gamma_{_{\subVS}}-\gamma_{_{\subFS}}}{\gamma_{_0}}$ is a (dimensionless) material constant.

\begin{remark}
For isotropic surface energy, i.e., $\gamma(\vec n)\equiv 1$, we obtain that $\boldsymbol{\xi}=\vec n$ and $\mu=\nabla_{_S}\cdot\boldsymbol\xi=\nabla_{_S}\cdot\vec n=\mathcal{H}$; for anisotropic surface energy,
by Definition~\ref{defi:calculus} and some calculations, we can obtain that the dimensionless chemical potential $\mu$ is the weighted mean curvature discussed in~\cite{Taylor92}.
\end{remark}

\begin{remark}
The contact line condition in~\eqref{eqn:boundcon1} ensures that the contact line must move along the substrate plane. Because the contact line $\Gamma$ lies on the substrate (i.e., $Oxy$ plane), the third component of $\vec n_{_\Gamma}$ is always zero, i.e., $n_{_{\Gamma,3}}=0$. As long as the initial condition satisfies $z_{_\Gamma}(\cdot,0)=0$, it can automatically satisfy the boundary condition (i) $z_{_\Gamma}(\cdot,t)=0,\forall\, t>0$ by using the boundary condition (ii).  The last boundary condition (iii) ensures that the total volume/mass of the thin film is conserved during the evolution, i.e., no-mass flux at the moving contact line.
\end{remark}

\begin{remark}
The above governing equation is well-posed when the surface energy is isotropic or weakly anisotropic. But when the surface energy is strongly anisotropic, some missing orientations will appear on equilibrium shapes~\cite{Sekerka05,Spencer04}; in this case, the governing equation becomes ill-posed, and it can be regularized by adding regularization terms such that the regularized sharp-interface model is well-posed~\cite{Jiang16,Bao17}. For the analytical criteria about the classification of surface energy anisotropy in 3D, interested readers could refer to~\cite{Sekerka05}.
\end{remark}

\subsection{Mass conservation and energy dissipation}

In the following, we will rigorously prove that the proposed sharp-interface model satisfies the mass conservation and the total free energy dissipation during the evolution.

\begin{prop}
Assume that $\vec X(\cdot,t)$ is the solution of the sharp-interface model, i.e., \eqref{eqn:weak31}-\eqref{eqn:weak32} with boundary conditions~\eqref{eqn:boundcon1}-\eqref{eqn:boundcon3}, and denote $S(t):=\vec X(\cdot,t)$ as the moving film/vapor interface. Then, the total volume (or mass) of the thin film, labeled as $|\Omega(t)|$, is conserved, i.e.,
\begin{equation}
|\Omega(t)|\equiv |\Omega(0)|,\qquad t\ge0.
\end{equation}
Furthermore, the (dimensionless) total interfacial free energy of the system is non-increasing during the evolution, i.e.,
\begin{equation}\label{eqn:totalenegw}
W(t)\le W(t_1)\le W(0)=\int_{S(0)}\gamma(\vec n)\;dS - \sigma A(\Gamma(0)),\qquad
 t\ge t_1\ge0.
\end{equation}
\end{prop}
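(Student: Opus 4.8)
The plan is to prove both assertions by differentiating the relevant geometric quantity in time and recognizing that the time derivative has exactly the form of a first variation $\delta(\cdot)(S;\vec V)$, with the deformation velocity $\vec V_0=\partial_t\vec X$ supplied by the model equations \eqref{eqn:weak31}--\eqref{eqn:weak32} and the boundary conditions \eqref{eqn:boundcon1}--\eqref{eqn:boundcon3}. For the mass conservation, I would apply Proposition~\ref{lm:domainvariation} with integrand $\psi\equiv1$ to the moving domain $\Omega(t)$, whose boundary decomposes into the film/vapor interface $S$ and the flat film/substrate interface $S_{_{\subFS}}$. This gives $\frac{d}{dt}|\Omega(t)|=\int_{\partial\Omega}\vec V_0\cdot\vec n\;dS$. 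On $S_{_{\subFS}}$ the contribution vanishes because the substrate is rigid and lies in the plane $z=0$, so the boundary velocity has no component along the vertical substrate normal; equivalently, the contact-line condition \eqref{eqn:boundcon1} forces the interface to meet the substrate at height zero along $\Gamma$. Only the film/vapor part survives, leaving $\frac{d}{dt}|\Omega(t)|=\int_S v_n\;dS=\int_S\Delta_{_S}\mu\;dS$.

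The next step is to show this surface integral vanishes. Writing $\Delta_{_S}\mu=\nabla_{_S}\cdot(\nabla_{_S}\mu)$ and noting that $\nabla_{_S}\mu$ is tangential (so $\nabla_{_S}\mu\cdot\vec n=0$), the surface divergence identity \eqref{eqn:divergencesurface} collapses to the boundary term $\int_\Gamma\nabla_{_S}\mu\cdot\vec c_{_\Gamma}\;d\Gamma=\int_\Gamma(\vec c_{_\Gamma}\cdot\nabla_{_S}\mu)\;d\Gamma$, which is zero by the zero-mass-flux condition \eqref{eqn:boundcon3}. Hence $\frac{d}{dt}|\Omega(t)|=0$ and the volume is conserved.

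For the energy dissipation, I would identify $\frac{dW}{dt}$ with the first variation $\delta W(S;\vec V)$ from Theorem~\ref{thm:Wvariation}, with $\gamma_{_{\subFS}}-\gamma_{_{\subVS}}$ replaced by $-\sigma$ under the dimensionless scaling and $\vec V_0=\partial_t\vec X$. Substituting $\vec V_0\cdot\vec n=v_n=\Delta_{_S}\mu$ on $S$ together with $\nabla_{_S}\cdot\boldsymbol{\xi}=\mu$, and inserting the relaxed contact-angle law \eqref{eqn:boundcon2}, namely $\vec V_0\cdot\vec n_{_\Gamma}=v_{n_{_\Gamma}}=-\eta(\vec c_{_\Gamma}^\gamma\cdot\vec n_{_\Gamma}-\sigma)$, yields
\[
\frac{dW}{dt}=\int_S\mu\,\Delta_{_S}\mu\;dS-\eta\int_\Gamma(\vec c_{_\Gamma}^\gamma\cdot\vec n_{_\Gamma}-\sigma)^2\;d\Gamma.
\]
The contact-line term is manifestly non-positive. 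For the surface term I would integrate by parts once more via \eqref{eqn:integralparts} applied to $\mu\,\nabla_{_S}\mu$, converting $\int_S\mu\,\Delta_{_S}\mu\;dS$ into $-\int_S|\nabla_{_S}\mu|^2\;dS$ plus a boundary integral $\int_\Gamma\mu\,(\vec c_{_\Gamma}\cdot\nabla_{_S}\mu)\;d\Gamma$ that again vanishes by \eqref{eqn:boundcon3}. This leaves
\[
\frac{dW}{dt}=-\int_S|\nabla_{_S}\mu|^2\;dS-\eta\int_\Gamma(\vec c_{_\Gamma}^\gamma\cdot\vec n_{_\Gamma}-\sigma)^2\;d\Gamma\le0,
\]
so $W(t)$ is non-increasing; evaluating the dimensionless energy \eqref{eqn:totalenergy} at $t=0$ gives the stated initial value $W(0)=\int_{S(0)}\gamma(\vec n)\;dS-\sigma A(\Gamma(0))$, and monotonicity delivers $W(t)\le W(t_1)\le W(0)$ for $t\ge t_1\ge0$.

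The main obstacle I anticipate is not the algebra but the careful bookkeeping of the moving contact line in the volume computation: one must justify that the expansion or contraction of the wetted region $S_{_{\subFS}}$ contributes nothing to $\frac{d}{dt}|\Omega(t)|$, which hinges on the surface $S$ meeting the substrate at height zero along $\Gamma$ and on the substrate normal being orthogonal to the tangential contact-line velocity. Once this is settled, both results follow from two applications of the surface integration-by-parts identities combined with the three boundary conditions, with the zero-mass-flux condition \eqref{eqn:boundcon3} doing the essential work in each case.
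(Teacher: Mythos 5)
Your proposal is correct and follows essentially the same route as the paper: both results are obtained by reading the time derivative as the first variation with $\vec V_0=\partial_t\vec X$, substituting the governing equations and boundary conditions, and integrating by parts so that the zero-mass-flux condition \eqref{eqn:boundcon3} kills the boundary terms. The only difference is that you spell out why the wetted region $S_{_{\subFS}}$ contributes nothing to $\frac{d}{dt}|\Omega(t)|$, a point the paper's proof passes over silently by writing the volume rate directly as an integral over $S(t)$.
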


\begin{proof}
By making use of the first variation~\eqref{eqn:domainvariation} and simply choosing the integrand
$\psi(\vec x)\equiv 1, \forall\,\vec x \in \Omega$, and using the governing equation~\eqref{eqn:weak31}, we can calculate the time derivative of the total volume as (noting that ${\vec V}_0=\partial_t\vec X$)
\begin{equation}
\frac{d}{dt}|\Omega(t)|=\int_{S(t)}\partial_t\vec X\cdot\vec n\;dS=\int_{S(t)}\Delta_{_S}\mu\;dS=0, \qquad t\ge0,
\end{equation}
where the last equality comes from the integration by parts and the zero-mass flux condition~\eqref{eqn:boundcon3}, and it
indicates that the total volume/mass is conserved.

To obtain the time derivative of the (dimensionless) total free energy, by making use of Theorem~\ref{thm:Wvariation}
and~\eqref{eqn:energyvariation}, but replacing the perturbation variable $\epsilon$ with the time variable $t$,
we can immediately obtain
\begin{eqnarray}
\frac{d}{dt}W(t)&=&\int_{S(t)}(\nabla_{_S}\cdot\boldsymbol{\xi})\,(\partial_t\vec X\cdot\vec n)\;dS+\int_{\Gamma(t)}(\vec c_{_\Gamma}^\gamma\cdot\vec n_{_\Gamma} - \sigma)\,(\partial_t\vec X_{\Gamma}\cdot\vec n_{_\Gamma})\;d\Gamma.\nonumber
\end{eqnarray}
By substituting the governing equations and the relaxed contact angle boundary condition, i.e.,
\begin{equation}
\mu=\nabla_{_S}\cdot\boldsymbol{\xi}, \quad \Delta_{_S}\mu = \partial_t\vec X\cdot\vec n,\quad \partial_t\vec X_{\Gamma}\cdot\vec n_{_\Gamma} = -\eta(\vec c_{_\Gamma}^\gamma\cdot\vec n_{_\Gamma} - \sigma),
\end{equation}
into the above equation and using the integration by parts and the zero-mass flux condition, we obtain
\begin{eqnarray}
\frac{d}{dt}W(t)&=&\int_{S(t)}\mu\,\Delta_S\mu\;dS-\eta\int_{\Gamma(t)}(\vec c_{_\Gamma}^\gamma\cdot\vec n_{_\Gamma} - \sigma)^2\;d\Gamma\nonumber\\
&=&-\int_{S(t)}|\nabla_{_S}\mu|^2\;dS-\eta\int_{\Gamma(t)}(\vec c_{_\Gamma}^\gamma\cdot\vec n_{_\Gamma} - \sigma)^2\;d\Gamma\le 0,\qquad t\ge0,
\end{eqnarray}
where the constant $\eta>0$. The last inequality immediately implies the energy dissipation.
\end{proof}

\begin{remark}
In the above proof, we need to calculate the time derivatives of the total volume and the total free energy. These two derivatives can be easily obtained by making use of the speed method and the first variation presented
in Section 2. In Section 2, we consider any type of smooth perturbations. In fact, a family of evolving interface
surfaces $\{S(t)\}_{t\geq0}$ can be also thought of as a type of perturbations, only by replacing the perturbation variable
$\epsilon$ with the time variable $t$. Therefore, the time derivatives can be directly obtained by using the first
variation of the total volume functional and the total free energy functional.
\end{remark}

\section{Numerical results}

In this section, we perform numerical simulations for solid-state dewetting in 3D to investigate the morphological evolution of thin films in various cases. We implement the parametric finite element method (PFEM)~\cite{Bao17,Bao18} for solving the proposed sharp-interface model in 3D. For the detailed introduction of numerical algorithms about PFEM in 3D, interested readers could refer to~\cite{Bao18}.

First, we focus on the case for isotropic surface energy, i.e., $\gamma(\vec n) \equiv 1$. We start with numerical examples for an initially, short cuboid island with $(4,4,1)$ representing its length, width and height, respectively (as shown in Fig.~\ref{fig:Ne1}(a)). The computational parameter is chosen as $\sigma=\cos({5\pi}/{6})$. In Fig.~\ref{fig:Ne1}, we show several snapshots of the morphology evolution for the short cuboid towards its equilibrium shape. As time evolves, the initial sharp corners and edges along the island become smoother and smoother (see Fig.~\ref{fig:Ne1}(b)), and finally the island film approaches a spherical shape as its equilibrium shape (see Fig.~\ref{fig:Ne1}(f)).
\begin{figure}[!htp]
\centering
\includegraphics[width=0.95\textwidth]{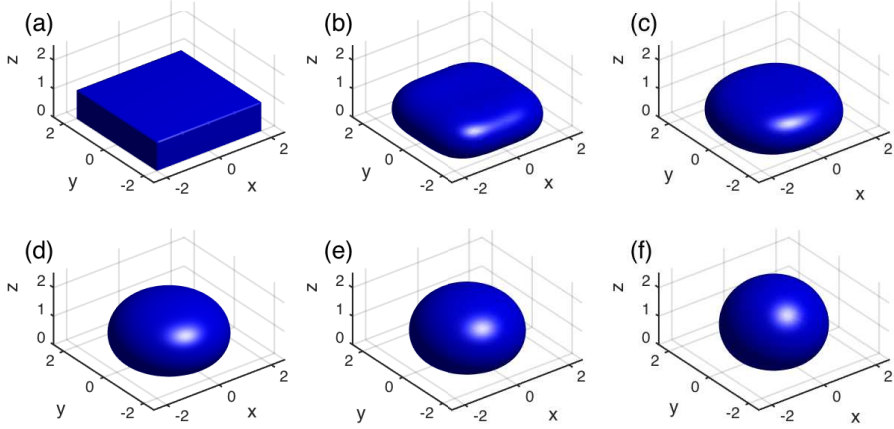}
\caption{Several snapshots during the evolution of an initially, cuboid island film with isotropic surface energy towards its equilibrium shape: (a) $t=0$; (b) $t=0.1$; (c) $t=0.2$; (d) $t=0.5$; (e) $t=0.7$; (f) $t=1.4$, where the initial shape of the thin film is chosen as a $(4,4,1)$ cuboid, and the material constant is chosen as $\sigma=\cos({5\pi}/{6})$.}
\label{fig:Ne1}
\end{figure}

\begin{figure}[!htp]
\centering
\includegraphics[width=0.95\textwidth]{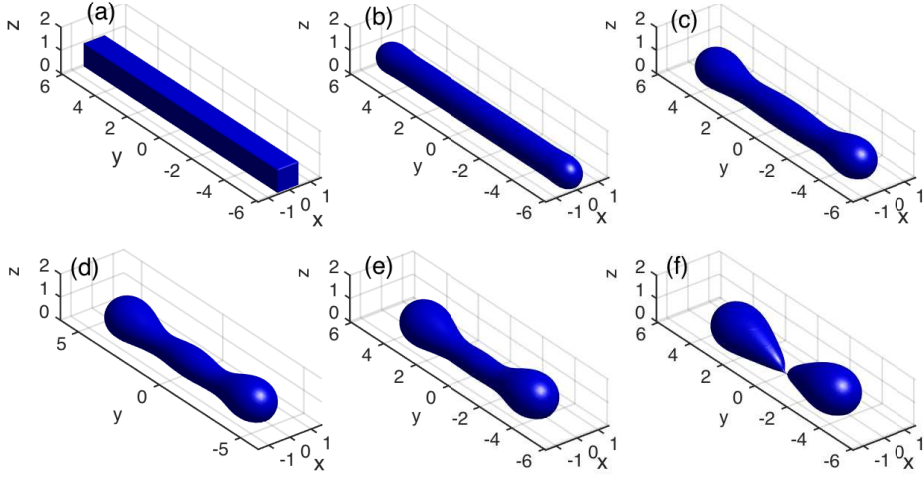}
\caption{Several snapshots during the evolution of an initial, cuboid island film with isotropic surface energy until its pinch-off time: (a) $t=0$; (b) $t=0.01$; (c) $t=0.30$; (d) $t=0.50$; (e) $t=0.80$; (f) $t=1.03$, where the initial shape is
chosen as a $(1,12,1)$ cuboid, and the material constant $\sigma=\cos (3\pi/4)$.}
\label{fig:1121islands}
\end{figure}

Short cuboid island films tend to form a single spherical shape as its equilibrium which minimizes its total free energy (i.e., the minimal surface area). However, the morphological evolution for long cuboid islands could be quite different. Due to the Plateau-Rayleigh instability~\cite{Kim15,Rayleigh78,Mccallum96}, long cuboid islands could pinch off and break up into a number of small isolated particles on the substrate before they approach a single spherical shape as its equilibrium. In order to investigate this phenomenon, we perform the simulation by choosing the material constant as $\sigma=\cos(3\pi/4)$, and the shape of initial island film as a long cuboid with $(1,12,1)$. For the isotropic case, as can be seen in Fig.~\ref{fig:1121islands}, the island quickly evolves into a cylinder-like shape during the evolution; then it accumulates more and more materials near the two edges, while the two necks appear and become thinner and thinner; finally, it pinches off at the neck and breaks up into two small isolated islands on the substrate. For cubic anisotropic surface energies, long cuboid islands also exhibit the similar pinch-off process as the isotropic surface energy case. We test the numerical example for an initially cuboid island with the same material constant and initial shape, as shown in Fig.~\ref{fig:A1121islands}. From the figure, we observe that three isolated small particles finally appear, while only two small particles are finally produced by the solid-state dewetting process in the isotropic case. This indicates that for this type of cubic anisotropic surface energy, the solid film tends to dewet more easily and quickly than in the isotropic case.

\begin{figure}[!htp]
\centering
\includegraphics[width=0.95\textwidth]{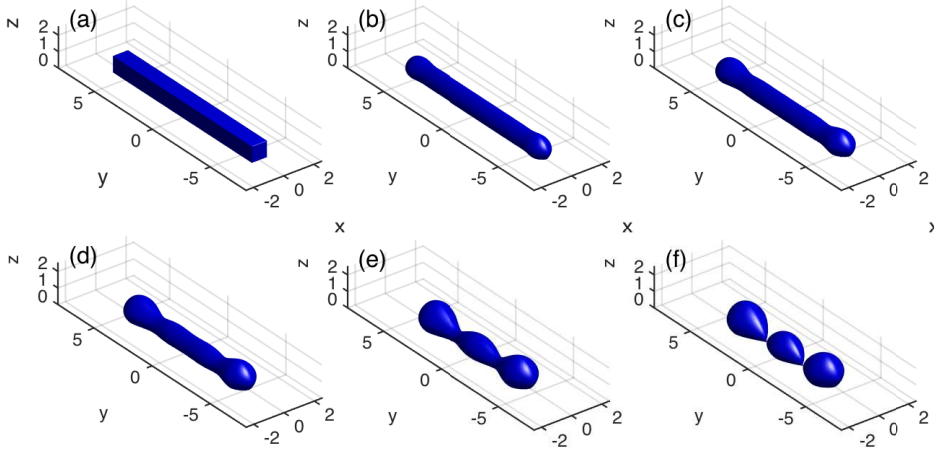}
\caption{Several snapshots during the evolution of an initial, cuboid island film with anisotropic surface energy until its pinch-off time: (a) $t=0$; (b) $t=0.020$; (c) $t=0.100$; (d) $t=0.240$; (e) $t=0.540$; (f) $t=0.695$, where the initial shape is chosen as a $(1,12,1)$ cuboid, the material constant $\sigma=\cos (3\pi/4)$, and the anisotropic surface energy is chosen as the cubic type, i.e., $\gamma(\vec n)=1+a(n_1^4+n_2^4+n_3^4)$ with $a=0.25$.}
\label{fig:A1121islands}
\end{figure}

\begin{figure}[!htp]
\centering
\includegraphics[width=0.95\textwidth]{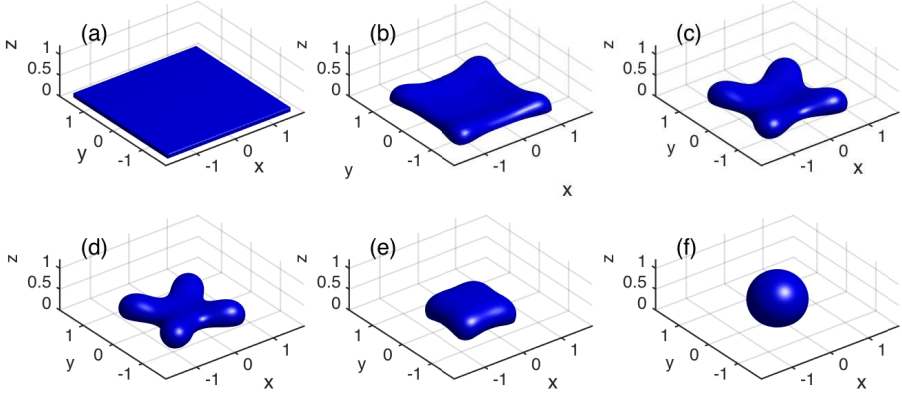}
\caption{Several snapshots during the evolution of an initial, small square island film with isotropic surface energy towards its equilibrium shape: (a) $t=0$; (b) $t=0.004$; (c) $t=0.008$; (d) $t=0.012$; (e) $t=0.020$; (f) $t=0.080$, where the initial shape is chosen as a $(3.2,3.2,0.1)$ cuboid, and the material constant $\sigma=\cos (5\pi/6)$.}
\label{fig:32321island}
\end{figure}

\begin{figure}[!htp]
\centering
\includegraphics[width=0.95\textwidth]{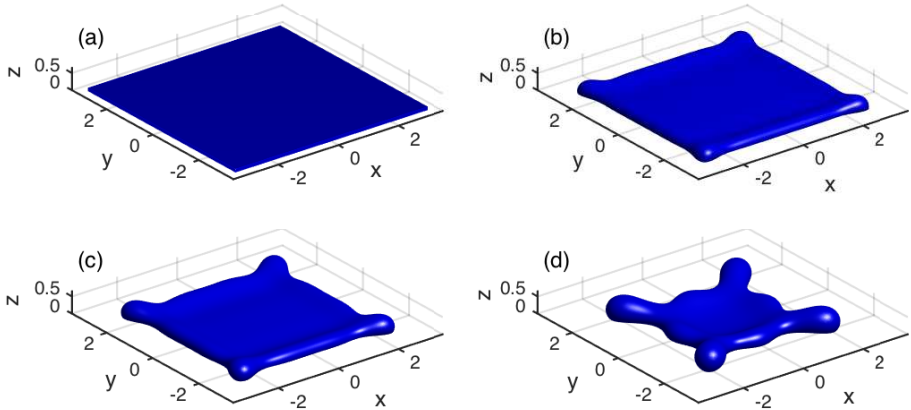}
\caption{Several snapshots during the evolution of an initial, large square island film with isotropic surface energy until its pinch-off time: (a) $t=0$; (b) $t=0.005$; (c) $t=0.010$; (d) $t=0.031$, where the initial shape is chosen as a $(6.4,6.4,0.1)$ cuboid, and the material constant $\sigma=\cos (5\pi/6)$.}
\label{fig:64641island}
\end{figure}

\begin{figure}[!htp]
\centering
\includegraphics[width=0.95\textwidth]{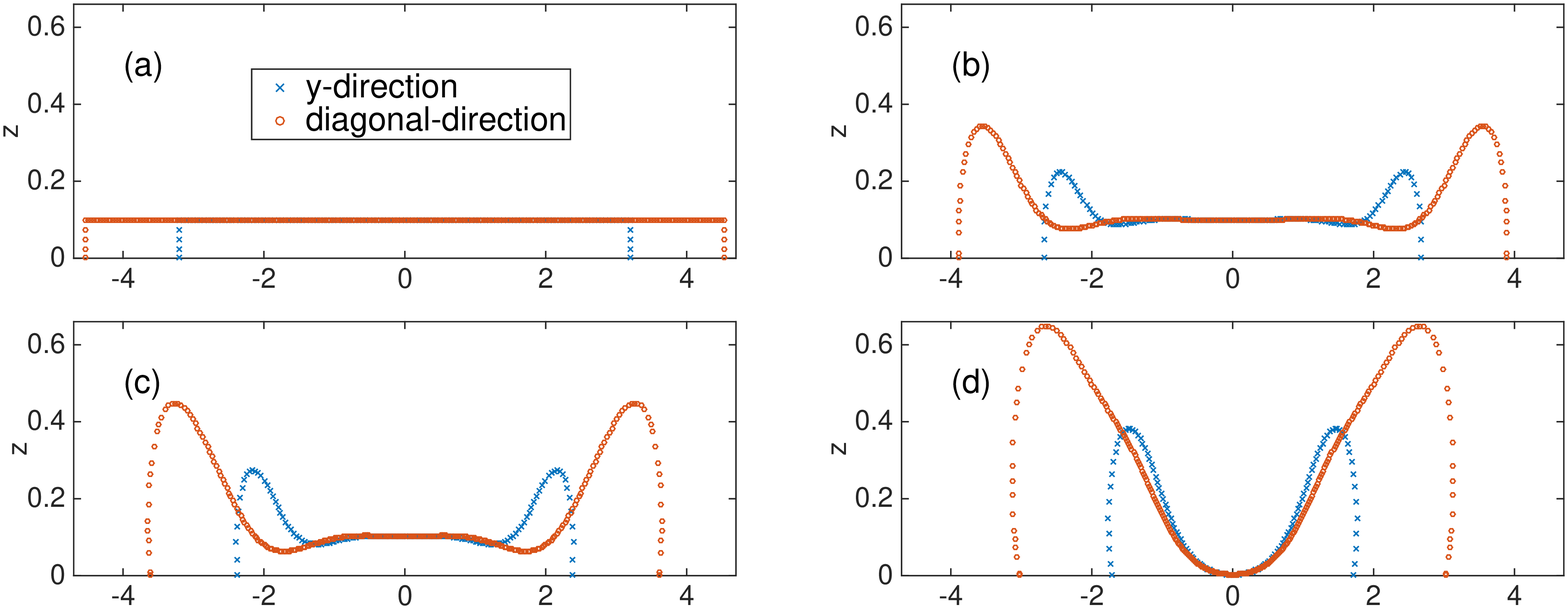}
\caption{The cross-section profile of the island film along its $y$-direction and diagonal direction for the example shown in Fig.~\ref{fig:64641island}: (a) $t=0$; (b) $t=0.005$; (c) $t=0.010$; (d) $t=0.031$.}
\label{fig:64641islandcc}
\end{figure}

Finally, we investigate the morphological evolution of square island films with size $(m,m,h)$ on a flat substrate. We start by simulating the evolution of an initial, small square island with size $(3.2,3.2,0.1)$, and the material constant is chosen as $\sigma=\cos({5\pi}/{6})$. As can be seen in Fig.~\ref{fig:32321island}, at the beginning, the square island retracts much more slowly at its four corners than at the middle points of the four edges. As time evolves, this process results in an almost cross shape (see Fig.~\ref{fig:32321island}(c)-(d)). This phenomenon, known as ``mass accumulation'' at the corner, has been previously observed in experiments~\cite{Thompson12,Ye10b,Ye11b} or numerical simulations by a phase-field approach~\cite{Jiang12,Naffouti17}. Subsequently, because the length of the square island is small, these retracting corners eventually catch up with the edges, then the contact line begins to move towards a circular shape in order to approach a spherical shape as its equilibrium (see Fig.~\ref{fig:32321island}(f)). During the evolution, we also observe that a valley appears at the center of the island, but finally it disappears. To observe the possible pinch-off phenomenon, we enlarge the square size and simulate the evolution of an initial, large square island with size $(6.4,6.4,0.1)$
(shown in Fig.~\ref{fig:64641island}). From the figure, we observe that the valley at the center becomes deeper and deeper, and
it eventually touches the substrate, and produces a hole in the center of the island. We stop the numerical simulation at the moment when there exists one new mesh point which touches the substrate. For a better illustration, in Fig.~\ref{fig:64641islandcc}, we also plot several snapshots about its cross-section profile
of the island film during the evolution.

\section{Conclusions}

We proposed a sharp-interface model for simulating solid-state dewetting of thin films in 3D, and this model can include the effect of the surface energy anisotropy.
Based on the Cahn-Hoffman $\boldsymbol{\xi}$-vector formulation and shape derivatives, we derived rigorously the first variation of the total free energy functional of the solid-state dewetting problem. From the first variation, necessary conditions for the equilibrium shape of solid-state dewetting were rigorously given in mathematics. Furthermore, a kinetic sharp-interface model was also proposed for simulating the solid-state dewetting of thin films in 3D. The governing equations described the interface evolution which is controlled by surface diffusion and contact line migration. Numerical simulations were performed by solving the proposed sharp-interface model, and numerical results reproduced the complex features in the solid thin film dewetting observed in experiments, such as edge retraction, hole formation, faceting, corner accumulation, pinch-off and Rayleigh instability.

In this paper, we assume that the surface diffusion is the only driving force for solid-state dewetting, and the effect of elasticity associated with a mismatch strain in the film layer during
deposition is negligible, which is often the case for Ni films on MgO substrates and Si films on
amorphous SiO$_2$ substrates (i.e., SOI). In the future, we can include the elastic effect into the model to study dewetting phenomena in semiconductor electronic and optoelectronic devices (such as In-GaAs/GaAs and SiGe/Si systems), especially for studying the Stranski-Krastanow (SK) and Volmer-Weber (VW) growth modes. Meanwhile, we can also study how the material microstructure affects the boundary conditions as discussed in~\cite{Tripathi18}.


\bibliographystyle{siamplain}
\bibliography{thebib}

\begin{thebibliography}{10}

\bibitem{Alpay15}
{\sc D.~Alpay, L.~Peng, and L.~D. Marks}, {\em Are nanoparticle corners
  round?}, J. Phys. Chem. C, 119 (2015), pp.~21018--21023.

\bibitem{Amram12}
{\sc D.~Amram, L.~Klinger, and E.~Rabkin}, {\em Anisotropic hole growth during
  solid-state dewetting of single-crystal {Au--Fe} thin films}, Acta Mater., 60
  (2012), pp.~3047--3056.

\bibitem{Armelao06}
{\sc L.~Armelao, D.~Barreca, G.~Bottaro, A.~Gasparotto, S.~Gross, C.~Maragno,
  and E.~Tondello}, {\em Recent trends on nanocomposites based on {Cu, Ag and
  Au} clusters: A closer look}, Coord. Chem. Rev., 250 (2006), pp.~1294--1314.

\bibitem{Balluffi05}
{\sc R.~W. Balluffi, S.~Allen, and W.~C. Carter}, {\em Kinetics of materials},
  John Wiley \& Sons, 2005.

\bibitem{Bao17b}
{\sc W.~Bao, W.~Jiang, D.~J. Srolovitz, and Y.~Wang}, {\em Stable equilibria of
  anisotropic particles on substrates: a generalized {Winterbottom}
  construction}, SIAM J. Appl. Math, 77 (2017), pp.~2093--2118.

\bibitem{Bao17}
{\sc W.~Bao, W.~Jiang, Y.~Wang, and Q.~Zhao}, {\em A parametric finite element
  method for solid-state dewetting problems with anisotropic surface energies},
  J. Comput. Phys., 330 (2017), pp.~380--400.

\bibitem{Cahn74}
{\sc J.~Cahn and D.~Hoffman}, {\em A vector thermodynamics for anisotropic
  surfaces: {II.} curved and faceted surfaces}, Acta Metall., 22 (1974),
  pp.~1205--1214.

\bibitem{Cahn93}
{\sc J.~W. Cahn and C.~A. Handwerker}, {\em Equilibrium geometries of
  anisotropic surfaces and interfaces}, Mater. Sci. Eng: A, 162 (1993),
  pp.~83--95.

\bibitem{Carter95}
{\sc W.~C. Carter, A.~R. Roosen, J.~W. Cahn, and J.~E. Taylor}, {\em Shape
  evolution by surface diffusion and surface attachment limited kinetics on
  completely faceted surfaces}, Acta Metall. Mater., 43 (1995), pp.~4309--4323.

\bibitem{Deckelnick05}
{\sc K.~Deckelnick, G.~Dziuk, and C.~M. Elliott}, {\em Computation of geometric
  partial differential equations and mean curvature flow}, Acta Numer., 14
  (2005), pp.~139--232.

\bibitem{Dougan12}
{\sc G.~Do{\u{g}}an and R.~H. Nochetto}, {\em First variation of the general
  curvature-dependent surface energy}, ESAIM: M2AN, 46 (2012), pp.~59--79.

\bibitem{Dornel06}
{\sc E.~Dornel, J.~Barbe, F.~De~Cr{\'e}cy, G.~Lacolle, and J.~Eymery}, {\em
  Surface diffusion dewetting of thin solid films: Numerical method and
  application to {Si/SiO$_2$}}, Phys. Rev. B, 73 (2006), p.~115427.

\bibitem{Du10}
{\sc P.~Du, M.~Khenner, and H.~Wong}, {\em A tangent-plane marker-particle
  method for the computation of three-dimensional solid surfaces evolving by
  surface diffusion on a substrate}, J. Comput. Phys., 229 (2010),
  pp.~813--827.

\bibitem{Dziuk13}
{\sc G.~Dziuk and C.~M. Elliott}, {\em Finite element methods for surface
  {PDEs}}, Acta Numer., 22 (2013), pp.~289--396.

\bibitem{Dziwnik15}
{\sc M.~Dziwnik, A.~M{\"u}nch, and B.~Wagner}, {\em An anisotropic phase-field
  model for solid-state dewetting and its sharp-interface limit}, Nonlinearity,
  30 (2017), pp.~1465--1496.

\bibitem{Hintermuller2004}
{\sc M.~Hinterm{\"u}ller and W.~Ring}, {\em A second-order shape optimization
  approach for image segmentation}, SIAM J. Appl. Math., 64 (2004),
  pp.~442--467.

\bibitem{Hoffman72}
{\sc D.~W. Hoffman and J.~W. Cahn}, {\em A vector thermodynamics for
  anisotropic surfaces: {I.} fundamentals and application to plane surface
  junctions}, Surface Science, 31 (1972), pp.~368--388.

\bibitem{Jiang12}
{\sc W.~Jiang, W.~Bao, C.~V. Thompson, and D.~J. Srolovitz}, {\em Phase field
  approach for simulating solid-state dewetting problems}, Acta Mater., 60
  (2012), pp.~5578--5592.

\bibitem{Jiang18b}
{\sc W.~Jiang, Y.~Wang, D.~J. Srolovitz, and W.~Bao}, {\em Solid-state
  dewetting on curved substrates}, Phys. Rev. Mater., 2 (2018), p.~113401.

\bibitem{Jiang16}
{\sc W.~Jiang, Y.~Wang, Q.~Zhao, D.~J. Srolovitz, and W.~Bao}, {\em Solid-state
  dewetting and island morphologies in strongly anisotropic materials}, Scripta
  Mater., 115 (2016), pp.~123--127.

\bibitem{Jiang18}
{\sc W.~Jiang and Q.~Zhao}, {\em Sharp-interface approach for simulating
  solid-state dewetting in two dimensions: a {Cahn-Hoffman}
  $\boldsymbol{\xi}$-vector formulation}, Physica D, 390 (2019), pp.~69--83.

\bibitem{Jiang18c}
{\sc W.~Jiang, Q.~Zhao, T.~Qian, D.~J. Srolovitz, and W.~Bao}, {\em Application
  of {Onsager's} variational principle to the dynamics of a solid toroidal
  island on a substrate}, Acta Mater., 163 (2019), pp.~154--160.

\bibitem{Jiran90}
{\sc E.~Jiran and C.~Thompson}, {\em Capillary instabilities in thin films}, J.
  Electron. Mater., 19 (1990), pp.~1153--1160.

\bibitem{Kaischew50}
{\sc R.~Kaischew}, {\em Equilibrium shape and work of formation of crystalline
  nuclei on a foreign substrate (in {Bulgarian})}, Commun. Bulg. Acad. Sci., 1
  (1950), p.~100.

\bibitem{Kan05}
{\sc W.~Kan and H.~Wong}, {\em Fingering instability of a retracting solid film
  edge}, J. Appl. Phys., 97 (2005), p.~043515.

\bibitem{Kim15}
{\sc G.~H. Kim and C.~V. Thompson}, {\em Effect of surface energy anisotropy on
  {Rayleigh-like} solid-state dewetting and nanowire stability}, Acta Mater.,
  84 (2015), pp.~190--201.

\bibitem{Kim13}
{\sc G.~H. Kim, R.~V. Zucker, J.~Ye, W.~C. Carter, and C.~V. Thompson}, {\em
  Quantitative analysis of anisotropic edge retraction by solid-state dewetting
  of thin single crystal films}, J. Appl. Phys., 113 (2013), p.~043512.

\bibitem{Korzec14}
{\sc M.~D. Korzec, M.~Roczen, M.~Schade, B.~Wagner, and B.~Rech}, {\em
  Equilibrium shapes of polycrystalline silicon nanodots}, J. Appl. Phys., 115
  (2014), p.~074304.

\bibitem{Kovalenko17}
{\sc O.~Kovalenko, S.~Szab{\'o}, L.~Klinger, and E.~Rabkin}, {\em Solid state
  dewetting of polycrystalline {Mo} film on sapphire}, Acta Mater., 139 (2017),
  pp.~51--61.

\bibitem{Leroy16}
{\sc F.~Leroy, F.~Cheynis, Y.~Almadori, S.~Curiotto, M.~Trautmann,
  J.~Barb{\'e}, P.~M{\"u}ller, et~al.}, {\em How to control solid state
  dewetting: A short review}, Surface Science Reports, 71 (2016), pp.~391--409.

\bibitem{Mccallum96}
{\sc M.~S. McCallum, P.~W. Voorhees, M.~J. Miksis, S.~H. Davis, and H.~Wong},
  {\em Capillary instabilities in solid thin films: Lines}, J. appl. phys, 79
  (1996), pp.~7604--7611.

\bibitem{Mullins57}
{\sc W.~W. Mullins}, {\em Theory of thermal grooving}, J. Appl. Phys., 28
  (1957), pp.~333--339.

\bibitem{Naffouti17}
{\sc M.~Naffouti, R.~Backofen, M.~Salvalaglio, T.~Bottein, M.~Lodari, A.~Voigt,
  T.~David, A.~Benkouider, I.~Fraj, L.~Favre, et~al.}, {\em Complex dewetting
  scenarios of ultrathin silicon films for large-scale nanoarchitectures}, Sci.
  Adv., 3 (2017), p.~eaao1472.

\bibitem{Peng98}
{\sc D.~Peng, S.~Osher, B.~Merriman, and H.-K. Zhao}, {\em The geometry of
  {Wulff} crystal shapes and its relations with {Riemann} problems}, Nonlinear
  Partial Differential Equations: Evanston, IL,  (1998), pp.~251--303.

\bibitem{Pierre09b}
{\sc O.~Pierre-Louis, A.~Chame, and Y.~Saito}, {\em Dewetting of ultrathin
  solid films}, Phys. Rev. Lett., 103 (2009), p.~195501.

\bibitem{Rabkin14}
{\sc E.~Rabkin, D.~Amram, and E.~Alster}, {\em Solid state dewetting and stress
  relaxation in a thin single crystalline {Ni} film on sapphire}, Acta Mater.,
  74 (2014), pp.~30--38.

\bibitem{Randolph07}
{\sc S.~Randolph, J.~Fowlkes, A.~Melechko, K.~Klein, H.~Meyer~III, M.~Simpson,
  and P.~Rack}, {\em Controlling thin film structure for the dewetting of
  catalyst nanoparticle arrays for subsequent carbon nanofiber growth},
  Nanotechnology, 18 (2007), p.~465304.

\bibitem{Rayleigh78}
{\sc L.~Rayleigh}, {\em On the instability of jets}, Proc.~Lond. Math. Soc, 1
  (1878), pp.~4--13.

\bibitem{Schmidt09}
{\sc V.~Schmidt, J.~V. Wittemann, S.~Senz, and U.~G{\"o}sele}, {\em Silicon
  nanowires: a review on aspects of their growth and their electrical
  properties}, Adv. Mater, 21 (2009), pp.~2681--2702.

\bibitem{Sekerka05}
{\sc R.~F. Sekerka}, {\em Analytical criteria for missing orientations on
  three-dimensional equilibrium shapes}, J. Crystal Growth, 275 (2005),
  pp.~77--82.

\bibitem{Soko92}
{\sc J.~Soko{\l}owski and J.~Zolesio}, {\em Introduction to Shape Optimization:
  Shape Sensitivity Analysis}, Springer-Verlag, Berlin, 1992.

\bibitem{Spencer04}
{\sc B.~J. Spencer}, {\em Asymptotic solutions for the equilibrium crystal
  shape with small corner energy regularization}, Phys. Rev. E, 69 (2004),
  p.~011603.

\bibitem{Srolovitz86}
{\sc D.~J. Srolovitz and S.~A. Safran}, {\em Capillary instabilities in thin
  films: {II.} {Kinetics}}, J. Appl. Phys., 60 (1986), pp.~255--260.

\bibitem{Sutton95}
{\sc A.~P. Sutton and R.~W. Balluffi}, {\em Interfaces in crystalline
  materials}, Clarendon Press, 1995.

\bibitem{Taylor92}
{\sc J.~E. Taylor}, {\em {II}--mean curvature and weighted mean curvature},
  Acta Metall. Mater., 40 (1992), pp.~1475--1485.

\bibitem{Thompson12}
{\sc C.~V. Thompson}, {\em Solid-state dewetting of thin films}, Annu. Rev.
  Mater. Res., 42 (2012), pp.~399--434.

\bibitem{Tripathi18}
{\sc A.~K. Tripathi and O.~Pierre-Louis}, {\em Triple-line kinetics for solid
  films}, Phys. Rev. E, 97 (2018), p.~022801.

\bibitem{Wang15}
{\sc Y.~Wang, W.~Jiang, W.~Bao, and D.~J. Srolovitz}, {\em Sharp interface
  model for solid-state dewetting problems with weakly anisotropic surface
  energies}, Phys. Rev. B, 91 (2015), p.~045303.

\bibitem{Wheeler96}
{\sc A.~A. Wheeler and G.~B. McFadden}, {\em A $\boldsymbol{\xi}$-vector of
  anisotropic phase-field models: {3D} asmyptotics}, Euro. J. Appl. Math., 7
  (1996), pp.~367--381.

\bibitem{Winterbottom67}
{\sc W.~Winterbottom}, {\em Equilibrium shape of a small particle in contact
  with a foreign substrate}, Acta Metall., 15 (1967), pp.~303--310.

\bibitem{Wong00}
{\sc H.~Wong, P.~Voorhees, M.~Miksis, and S.~Davis}, {\em Periodic mass
  shedding of a retracting solid film step}, Acta Mater., 48 (2000),
  pp.~1719--1728.

\bibitem{Wulff1901}
{\sc G.~Wulff}, {\em Zur frage der geschwindigkeit des wachstums und der
  aufl\"{o}sung der krystallfl\"{a}chen}, Z. Kristallogr, 34 (1901),
  pp.~449--530.

\bibitem{Ye10a}
{\sc J.~Ye and C.~V. Thompson}, {\em Mechanisms of complex morphological
  evolution during solid-state dewetting of single-crystal nickel thin films},
  Appl. Phys. Lett., 97 (2010), p.~071904.

\bibitem{Ye10b}
{\sc J.~Ye and C.~V. Thompson}, {\em Regular pattern formation through the
  retraction and pinch-off of edges during solid-state dewetting of patterned
  single crystal films}, Phys. Rev. B, 82 (2010), p.~193408.

\bibitem{Ye11a}
{\sc J.~Ye and C.~V. Thompson}, {\em Anisotropic edge retraction and hole
  growth during solid-state dewetting of single crystal nickel thin films},
  Acta Mater., 59 (2011), pp.~582--589.

\bibitem{Ye11b}
{\sc J.~Ye and C.~V. Thompson}, {\em Templated solid-state dewetting to
  controllably produce complex patterns}, Adv. Mater., 23 (2011),
  pp.~1567--1571.

\bibitem{Young1805}
{\sc T.~Young}, {\em An essay on the cohesion of fluids}, Philos. Trans. R.
  Soc. London, 95 (1805), pp.~65--87.

\bibitem{Bao18}
{\sc Q.~Zhao, W.~Jiang, and W.~Bao}, {\em A parametric finite element method
  for solid-state dewetting problems in three dimensions}, arXiv:1908.08311,
  (2019).

\bibitem{Zucker13}
{\sc R.~V. Zucker, G.~H. Kim, W.~C. Carter, and C.~V. Thompson}, {\em A model
  for solid-state dewetting of a fully-faceted thin film}, Comptes Rendus
  Physique, 14 (2013), pp.~564--577.

\end{thebibliography}
\end{document}